

\documentclass[twoside]{article}
\usepackage[a4paper]{geometry}
\usepackage[utf8]{inputenc} 
\usepackage[T1]{fontenc} 
\usepackage{RR}
\usepackage{hyperref}
\RRNo{8491}
\RRdate{March 2014}
\RRauthor{Andr\'{e} Tavares\thanks{UFMG}
\and Benoit Boissinot\thanks{ENS Lyon}
\and Fernando Pereira\thanks{UFMG} 
\and Fabrice Rastello\thanks{Inria}
}
\authorhead{Tavares, Boissinot, Pereira, and Rastello}
\RRetitle{Parameterized Construction of Program Representations for Sparse
Dataflow Analyses}

\RRtitle{Représentation de programmes pour l'analyse creuse de flots de données: construction paramétrée}

\RRabstract{Data-flow analyses usually associate information with control flow regions.
Informally, if these regions are too small, like a point between two
consecutive statements, we call the analysis dense.
On the other hand, if these regions include many such points, then we call it
sparse.
This paper presents a systematic method to build program representations that
support sparse analyses.
To pave the way to this framework we clarify the
bibliography about well-known intermediate program representations.
We show that our approach, up to parameter choice, subsumes many of these
representations, such as the SSA, SSI and e-SSA forms.
In particular, our algorithms are faster, simpler and more frugal than the
previous techniques used to construct SSI - Static Single Information - form
programs.
We produce intermediate representations isomorphic to  Choi
{\em et al.}'s Sparse Evaluation Graphs (SEG) for the family of data-flow
problems that can be partitioned per variables.
However, contrary to SEGs, we can handle - sparsely - problems that are not in
this family.
}
\RRresume{
L'analyse de flot de données, associe en général l'information calculée, aux régions de flot de contrôle.
Informellement cette analyse est dite dense, si ces régions sont trop petites, i.e. par exemple restreintes aux points de programme situés entre deux instructions.
A l'opposé, cette analyse est dite creuse, si ces régions comprennent de nombreux points consécutifs.
Cet article présente une méthode de construction systématique d'une représentation de programme qui permet de manière naturelle l'implémentation d'analyses creuses.
Cette forme englobe plusieurs forme existante comme la forme SSA, la forme SSI, ou la forme e-SSA.
En particulier, l'algorithme présenté est plus rapide, plus simple et moins gourmand que les méthodes existantes de construction de SSI --Static Single Information.
Aussi, la représentation ainsi construite se trouve être isomorphe au graphe d'évaluation creux (Sparse Evaluation Graph --- SEG in English) de Choi et al. dans le cas particulier ou le problème d'analyse de flot de données peut être partitionné par variable. 
Cela dit, contrairement aux SEG, l'approche ici décrite n'est pas restreinte à cette famille de problèmes.
 }

\RRkeyword{Sparse Data-Flow Analysis, Compiler, Static Single Assignment, Static Single Information, SSA, SSI, Static Single Use, SSU, Iterated Dominance Frontier, Control-Flow Graph}
\RRmotcle{Analysis de flot de données, compilateur, forme à assignation unique, SSA, SSI, SSU, frontière de dominance itérée, graphe de flot de contrôle}

\RRprojets{GCG, Compsys}
\URRhoneAlpes

\usepackage{amsmath}
\usepackage{amsthm}
\usepackage{amssymb}
\usepackage{graphicx}
\usepackage{subfig}
\usepackage{multirow}
\usepackage{tabularx}
\usepackage[table]{xcolor}
\usepackage{rotating}
\usepackage{hhline}
\usepackage{xspace}

\newtheorem{definition}{Definition}
\newtheorem{theorem}{Theorem}
\newtheorem{lemma}{Lemma}

\def\SS{{\cal P}}
\def\Out{\mathrm{Out}}
\def\In{\mathrm{In}}
\def\Defs{\mathrm{Defs}}
\def\Def{\mathrm{Def}}
\def\Uses{\mathrm{Uses}}

\def\progpoint{program point\xspace}
\def\progpoints{program points\xspace}
\def\splitpoint{control flow node\xspace}
\def\splitpoints{control flow nodes\xspace}
\def\inclu{\sqsubseteq}
\def\undef{\textsf{undef}}
\def\SSIfy{\textsf{SSIfy}}
\def\incoming{\textrm{incoming\_edges}}
\def\outgoing{\textrm{outgoing\_edges}}
\def\isjoin{\textrm{is\_join}}
\def\isfork{\textrm{is\_fork}}
\def\Out{\mathrm{Out}}
\def\1{\qquad}
\def\2{\1\1}
\def\3{\2\1}
\def\4{\2\2}
\def\5{\3\2}
\def\6{\4\2}
\def\7{\5\2}
\def\8{\6\2}
\def\9{\7\2}
\def\If{{\sf  if }}
\def\Let{{\sf  let }}
\def\Then{{\sf  then }}
\def\Else{{\sf  else}}
\def\Foreach{{\sf foreach }}
\def\For{{\sf for }}
\def\While{{\sf while }}
\newcommand\var[1]{\mbox{\em #1}}
\begin{document}

\makeRR

\section{Introduction}
\label{sec:int}

Many data-flow analyses bind information to pairs formed by a variable and a
\progpoint~\cite{Ackerman84,Bodik00,Cartwright89,Damas82,Johnson93,Mahlke01,Nanda09,Plevyak96,Rimsa11,Roy10,Scholz08,Stephenson00,Su05,Hochstadt08,Wegman91}.
As an example, for each \progpoint $p$, and each integer variable $v$ live at
$p$, Stephenson {\em et al.}'s~\cite{Stephenson00} bit-width analysis finds the
size, in bits, of $v$ at $p$.
Although well studied in the literature, this approach might
produce redundant information.
For instance, a given variable $v$ may be mapped to the same bit-width along many
consecutive \progpoints.
Therefore, a natural way to reduce redundancies is to make these analyses
{\em sparser}, increasing the granularity of the program regions that they
manipulate.

There exists different attempts to implement data-flow analyses sparsely.
The Static Single Assignment (SSA) form~\cite{Cytron91}, for instance, allows us
to implement several analyses and optimizations, such as reaching definitions
and constant propagation, sparsely.
Since its conception, the SSA format has been generalized into many different
program representations, such as the {\em Extended-SSA} form~\cite{Bodik00}, the
{\em Static Single Information} (SSI) form~\cite{Ananian99},
and the {\em Static Single Use} (SSU)
form~\cite{George03,Lo98,Plevyak96}.
Each of these representations extends the reach of the SSA form to sparser
data-flow analyses; however, there is not a format that subsumes all the
others.
In other words, each of these three program representations fit specific types
of data-flow problems.
Another attempt to model data-flow analyses sparsely is due to Choi
{\em et al.}'s {\em Sparse Evaluation Graph} (SEG)~\cite{Choi91}.
This data-structure supports several different analyses sparsely, as long as
the abstract state of a variable does not interfere with the abstract state of
other variables in the same program.
This family of analyses is known as {\em Partitioned Variable Problems} in
the literature~\cite{Zadeck84}.

In this paper, we propose a framework that includes all
these previous approaches.
Given a data-flow problem defined by (i) a set of \splitpoints, that produce information, and (ii) a direction in which
information flows: forward, backward or both ways, we build a program
representation that allows to solve the problem sparsely using def-use chains.
The program representations that we generate ensure a key {\em single
information property}: the data-flow facts associated with a variable are
invariant along the entire live range of this variable.

\section{Static Single Information}
\label{sec_ssi}

Our objective is to generate program representations that bestow the {\em Static
Single Information property} (Definition~\ref{def:ssi}) onto a given data-flow
problem.
In order to introduce this notion, we will need a number of concepts, which we
define in this chapter.
We start with the concept of a {\em Data-Flow System}, which
Definition~\ref{def:dfa} recalls from the literature.
We consider a {\em \progpoint} a point between two consecutive instructions. If $p$ is a \progpoint, then $\mathit{preds}(p)$
(resp. $\mathit{succs}(p)$) is the set of all the \progpoints that are predecessors (resp. successors) of $p$.
A {\em transfer function} determines how information flows among these
\progpoints.
Information are elements of a {\em lattice}.
We find a solution to a data-flow problem by continuously solving the set of
transfer functions associated with each program region until a fix point
is reached.
Some \progpoints are {\em meet nodes}, because they combine
information coming from two or more regions.
The result of combining different elements of a lattice is given by a {\em meet}
operator, which we denote by $\wedge$.

\begin{definition}[Data-Flow System]
\label{def:dfa}
A data-flow system $E_{\var{dense}}$ is an equation system that associates, with each
\progpoint $p$, an element of a lattice ${\cal L}$, given by the equation
$x^p = \bigwedge_{s \in \mathit{preds}(p)} F^{s,p}(x^s)$, where: $x^p$ denotes
the abstract state associated with \progpoint $p$; $\mathit{preds}(p)$ is the set of control flow predecessors of $p$; $F^{s,p}$ is
the transfer function from \progpoint $s$ to \progpoint $p$.
The analysis can alternatively be written as a constraint system that binds to each \progpoint $p$ and each $s\in \mathit{preds}(p)$ the equation
$x^p = x^p \wedge  F^{s,p}(x^s)$ or, equivalently, the inequation
$x^p \inclu  F^{s,p}(x^s)$.
\end{definition}

The program representations that we generate lets us solve a class of data-flow
problems that we call {\em Partitioned Lattice per Variable} (PLV), and that
we introduce in Definition~\ref{def:plv}.
Constant propagation is an example of a PLV problem.
If we denote by $\cal C$ the lattice of constants, the overall lattice can be
written as ${\cal L}={\cal C}^n$, where $n$ is the number of variables.
In other words, this data-flow problem ranges on a product lattice that contains
a term for each variable in the target program.

\begin{definition}[Partitioned Lattice per Variable Problem (PLV)]
\label{def:plv}
Let ${\cal V}=\{v_1,\dots,v_n\}$ be the set of program variables. 
The Maximum Fixed Point problem on a data-flow system is a \emph{Partitioned
Lattice per Variable Problem} if, and only if, $\cal L$ can be decomposed
into the product of ${\cal L}_{v_1}\times \dots \times {\cal L}_{v_n}$ where
each ${\cal L}_{v_i}$ is the lattice  associated with program variable $v_i$.
In other words $x^s$ can be writen as $([v_1]^s,\dots,[v_n]^s)$ where $[v]^s$ denotes the abstract state associated with variable $v$ and \progpoint $s$. $F^{s,p}$ can thus be decomposed into the product of $F^{s,p}_{v_1}\times \dots\times F^{s,p}_{v_n}$ and the constraint system decomposed into the inequalities $[v_i]^p\sqsubseteq  F^{s,p}_{v_i}([v_1]^s,\dots,[v_n]^s)$.
\end{definition}

The transfer functions that we describe in Definition~\ref{def:tf} have no
influence on the solution of a data-flow system.
The goal of a sparse data-flow analysis is to shortcut these functions.
We accomplish this task by grouping contiguous \progpoints bound to these
functions into larger regions.

\begin{definition}[Trivial/Constant/Undefined Transfer functions]
\label{def:tf}
Let ${\cal L}_{v_1}\times {\cal L}_{v_2} \times \dots \times {\cal L}_{v_n}$ be
the decomposition per variable of lattice $\cal L$, where ${\cal L}_{v_i}$ is
the lattice associated with variable $v_i$.
Let $F_{v_i}$ be a transfer function from ${\cal L}$ to ${\cal L}_{v_i}$.
\begin{itemize}
\item $F_{v_i}$ is \emph{trivial} if $\forall x=([v_1], \dots, [v_n])\in {\cal L},\, F_{v_i}(x)=[v_i]$
\item $F_{v_i}$ is \emph{constant with value $C\in {\cal L}_{v_i}$} if $\forall x\in {\cal L},\, F_{v_i}(x)=C$
\item $F_{v_i}$ is {\em undefined} if $F_{v_i}$ is constant with value $\top$,
e.g., $F_{v_i}(x) = \top$, where $\top \wedge y = y \wedge \top = y$.
\end{itemize}
\end{definition}

A sparse data-flow analysis propagates information from the \splitpoint where this
information is created directly to the \splitpoint where this information is needed.
Therefore, the notion of {\em dependence}, which we state in
Definition~\ref{def:dep}, plays a fundamental role in our framework.
Intuitively, we say that a variable $v$ depends on a variable $v_j$ if the
information associated with $v$ might change in case the information associated
with $v_j$ does.

\begin{definition}[Dependence]
\label{def:dep}
We say that $F_v$ \emph{depends on variable $v_j$} if:
$$\begin{array}{l}
\exists x=([v_1],\dots,[v_n])\neq ([v_1]',\dots,[v_n]')=x' \textrm{ in } {\cal L}\\ 
\textrm{such that } \left[\strut F_v(x)\neq F_v(x') \textrm{ and } \forall k\neq j, \ [v_k] = [v_k]'\right]
\end{array}
$$
\end{definition}

In a {\em backward} data-flow analysis, the information that comes from
the predecessors of a node $n$ is combined to produce the information that
reaches the successors of $n$.
A {\em forward} analysis propagates information in the opposite direction.
We call meet nodes those places where information coming from multiple sources
are combined.
Definition~\ref{def:merge} states this concept more formally.

\begin{definition}[Meet Nodes]
\label{def:merge}
Consider a forward (resp. backward) monotone PLV problem, where $(Y_v^p)$ is
the maximum fixed point solution of variable $v$ at \progpoint $p$.
We say that a \progpoint $p$ is a meet node for variable $v$ if, and only if,
$p$ has $n\geq 2$ predecessors (resp. successors), $s_1, \ldots, s_n$, and there
exists $s_i\neq s_j$, such that $Y_v^{s_i} \neq Y_v^{s_j}$.
\end{definition}

Our goal is to build program representations in which the information
associated with a variable is invariant along the entire live range of this
variable.
A variable $v$ is {\em alive} at a \progpoint $p$ if there is a path from
$p$ to an instruction that uses $v$, and $v$ is not re-defined along the way.
The live range of $v$, which we denote by {\em live(v)}, is the
collection of \progpoints where $v$ is alive.

\begin{definition}[Static Single Information property]
\label{def:ssi}
Consider a forward (resp. backward) monotone PLV problem $E_{\var{dense}}$ stated as
in Definition~\ref{def:dfa}.
A program representation fulfills the Static Single Information property if, and
only if, it meets the following properties for each variable $v$:

\begin{description}
\item {\bf [SPLIT-DEF]:} for each two consecutive \progpoints $s$ and $p$ (resp. $p$ and $s$) such that $p\in \textrm{live}(v)$, and $F_v^{s,p}$ is non-trivial nor undefined, there should be an instruction between $s$ and $p$ that contains a definition (resp. last use) of $v$;
\item {\bf [SPLIT-MEET]:} each meet node $p$ with $n$ predecessors
$\{s_1, \ldots, s_n\}$ (resp. successors) should have a definition (resp. use)
of $v$ at $p$, and $n$ uses (resp. definitions) of $v$, one at each $s_i$.
We shall implement these defs/uses with $\phi$/$\sigma$-functions, as we
explain in Section~\ref{sub:split}.
\item {\bf [INFO]:} each \progpoint $p\not\in \textrm{live}(v)$ should be bound to undefined transfer functions, e.g., $F_v^{s,p}=\lambda x.\top$ for each $s\in \textit{preds}(p)$ (resp. $s\in \textit{succs}(p)$).
\item {\bf [LINK]:} for each two consecutive \progpoints $s$ and $p$  (resp. $p$ and $s$)  for which $F_v^\textit{s,p}$ depends on some $[u]^s$, there should be an instruction between $s$ and $p$ that contains a (potentially pseudo) use (resp. def) of $u$.
\item {\bf [VERSION]:} for each variable $v$, $\textrm{live}(v)$ is a connected component of the CFG.
\end{description}
\end{definition}

\subsection{Special instructions used to split live ranges}
\label{sub:split}

We group \splitpoints in three kinds:
interior nodes, forks and joins.
At each place we use a different notation to denote live range splitting.

{\em Interior nodes} are \splitpoints that have a unique predecessor and a unique successor. 
At these \splitpoints we perform live range splitting via copies.
If the \splitpoint already contains another instruction, then this copy \emph{must} be done \emph{in parallel} with the existing instruction.
The notation, \[\var{inst} \ \parallel\  v_1=v'_1 \ \parallel\  \dots \ \parallel\  v_m=v'_m\] denotes $m$ copies $v_i=v'_i$ performed in parallel with
instruction \var{inst}.
This means that all the uses of \var{inst} plus all $v'_i$ are read simultaneously, then \var{inst} is computed, then all definitions of \var{inst} plus all $v_i$ are written simultaneously.

In forward analyses, the information produced at different definitions of a
variable may reach the same meet node.
To avoid that these definitions reach the same use of $v$, we merge them at the earliest \splitpoint where they meet; hence, ensuring [SPLIT-MEET].
We do this merging via special instructions called $\phi$-functions, which were introduced by Cytron {\em et al.} to build SSA-form programs~\cite{Cytron91}.
The assignment \[v_1 = \phi(l^1:v_1^1, \ldots, l^q:v_1^q) \ \parallel\  \dots \ \parallel\  v_m = \phi(l^1:v_m^1, \ldots, l^q:v_m^q)\] contains $m$ $\phi$-functions to be performed in parallel.
The $\phi$ symbol works as a multiplexer.
It will assign to each $v_i$ the value in $v_i^j$, where $j$ is determined by $l^j$, the basic block last visited before reaching the $\phi$-function.
The above statement encapsulates $m$ parallel copies: all the variables
$v_1^j, \ldots, v_m^j$ are simultaneously copied into the variables
$v_1, \ldots, v_m$.
{Note that our notion of \splitpoints differs from the usual notion of nodes of the CFG}. A join node actually corresponds to the entry point of a CFG node: to this end we denote as $\In(l)$ the point right before $l$. As an example in Figure~\ref{fig:classInference}(d), $l_7$ is considered to be an interior node, and the $\phi$-function defining $v_6$ has been inserted at the join node $\In(l_7)$.

In backward analyses the information that emerges from different uses of a variable may reach the same meet node.
To ensure Property [SPLIT-MEET], the use that reaches the definition of a
variable must be unique, in the same way that in a SSA-form program the definition that reaches a use is unique.
We ensure this property via special instructions that Ananian has called
$\sigma$-functions~\cite{Ananian99}.
The $\sigma$-functions are the symmetric of $\phi$-functions, performing a parallel assignment depending on the execution path taken.
The assignment \[(l^1:v_1^1, \ldots, l^q:v_1^q) = \sigma(v_1) \ \parallel\  \dots \ \parallel\  (l^1:v_m^1, \ldots, l^q:v_m^q) = \sigma(v_m)\] represents $m$ $\sigma$-functions that assign to each variable $v_i^j$ the value in $v_i$ if control flows into block $l^j$.
These assignments happen in parallel, i.e., the $m$ $\sigma$-functions encapsulate $m$ parallel copies.
Also, notice that variables live in different branch targets are given
different names by the $\sigma$-function that ends that basic block.
Similarly to join nodes, a fork node is the exit point of a CFG node: $\Out(l)$ denotes the point right after CFG node $l$. As an example in Figure~\ref{fig:classInference}(d), $l_2$ is considered to be an interior node, and the $\sigma$-function using $v_1$ has been inserted at the fork node $\Out(l_2)$.  

\subsection{Examples of PLV Problems}
\label{sub:examples}

Many data-flow analyses can be classified as PLV problems.
In this section we present some meaningful examples.
Along each example we show the program representation that lets us solve it
sparsely.

~\\
\noindent
\textbf{Class Inference:} Some dynamically typed languages, such as Python,
Java\-Scrip, Ruby or Lua, represent objects as hash tables containing methods
and fields.
In this world, it is possible to speedup execution by replacing these
hash tables with actual object oriented virtual tables.
A class inference engine tries to assign a virtual table to a variable $v$
based on the ways that $v$ is used.
The Python program in Figure~\ref{fig:classInference}(a) illustrates this
optimization.
Our objective is to infer the correct suite of methods for each object bound to
variable $v$.
Figure~\ref{fig:classInference}(b) shows the control flow graph of the program,
and Figure~\ref{fig:classInference}(c) shows the results of a dense
implementation of this analysis.
In a dense analysis, each program instruction is associated with a transfer
function; however, some of these functions, such as that in label $l_3$, are
trivial.
We produce, for this example, the representation
given in Figure~\ref{fig:classInference}(d).
Because type inference is a backward analysis that extracts information from
use sites, we split live ranges at these \splitpoints, and rely on
$\sigma$-functions to merge them back.
The use-def chains that we derive from the program representation,
seen in Figure~\ref{fig:classInference}(e), lead naturally
to a constraint system, which we show in Figure~\ref{fig:classInference}(f).
A solution to this constraint system gives us a solution to our data-flow
problem.

\begin{figure}[t!]
\centering
\includegraphics[width=0.8\linewidth]{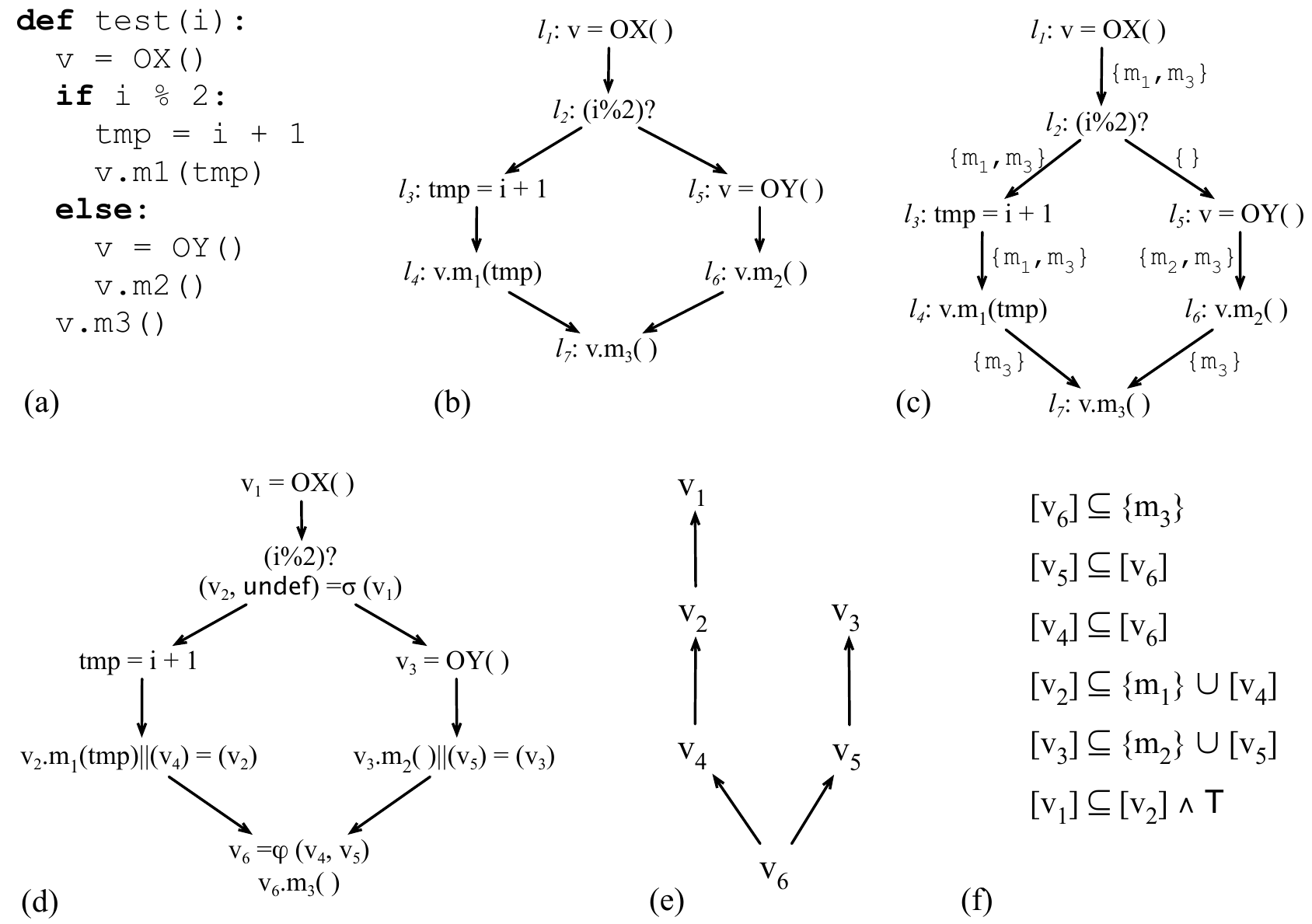}
\caption{\small Class inference as an example of backward data-flow analysis that takes information from the uses of variables.}
\label{fig:classInference}
\end{figure}

\begin{figure}[t!]
\centering
\includegraphics[width=0.8\linewidth]{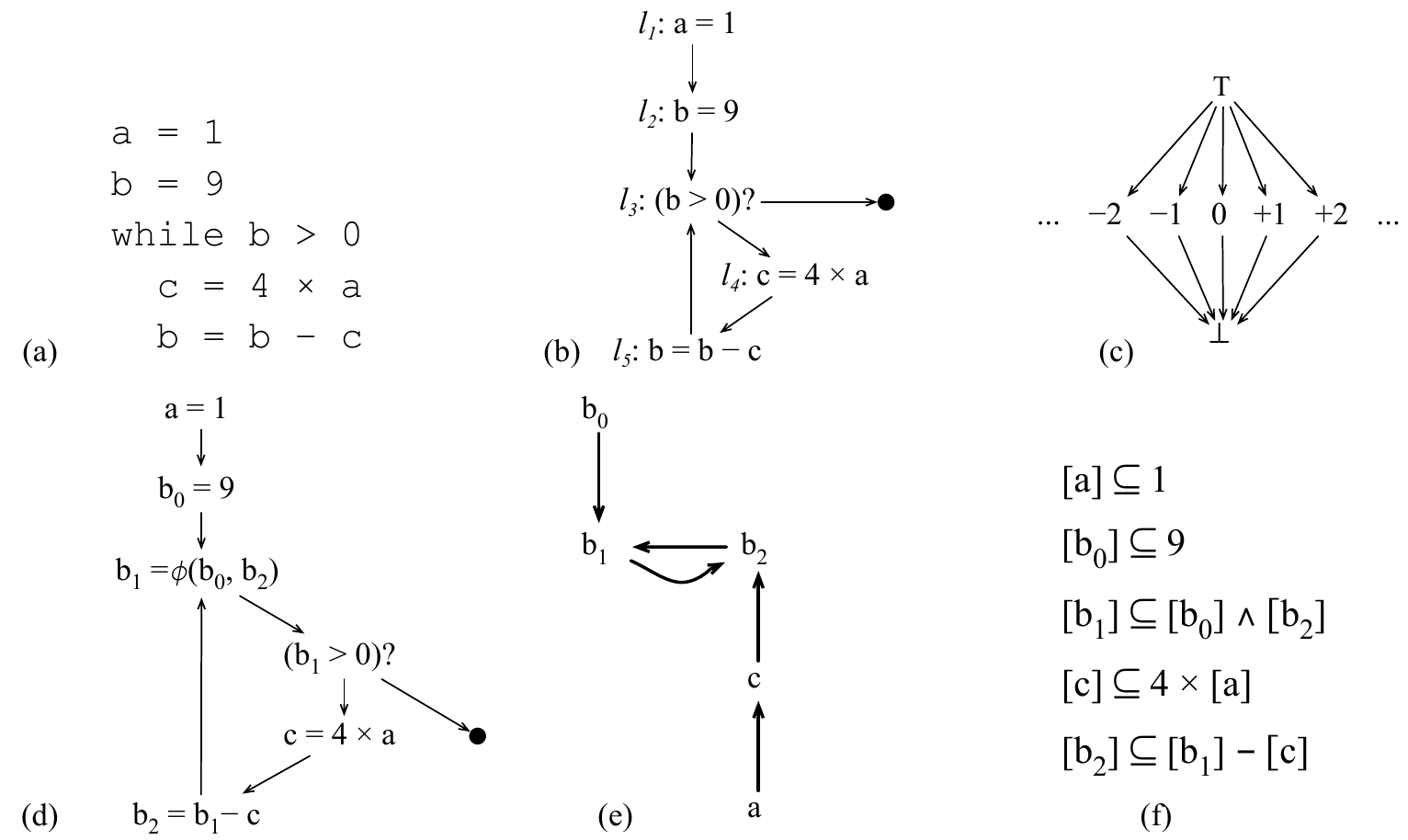}
\caption{\small Constant propagation as an example of forward data-flow analysis that takes information from the definitions of variables.}
\label{fig:constantProp}
\end{figure}

~\\
\noindent
\textbf{Constant Propagation: }
Figure~\ref{fig:constantProp} illustrates constant propagation, e.g.,
which variables in the program of
Figure~\ref{fig:constantProp}(a) can be replaced by constants?
The CFG of this program is given in Figure~\ref{fig:constantProp}(b).
Constant propagation has a very simple lattice $\cal L$, which we show in
Figure~\ref{fig:constantProp}(c).
In constant propagation, information is produced at the program points where
variables are defined.
Thus, in order to meet Definition~\ref{def:ssi}, we must guarantee that each
program point is reachable by a single definition of a variable.
Figure~\ref{fig:constantProp}(d) shows the intermediate representation that we create for the program in Figure~\ref{fig:constantProp}(b).
In this case, our intermediate representation is equivalent to the SSA form.
The def-use chains implicit in our program representation lead
to the constraint system shown in Figure~\ref{fig:constantProp}(f).
We can use the def-use chains seen in Figure~\ref{fig:constantProp}(e) to
guide a worklist-based constraint solver, as Nielson
{\em et al.}~\cite[Ch.6]{Nielson05} describe.

\begin{figure}[t!]
\centering
\includegraphics[width=0.8\linewidth]{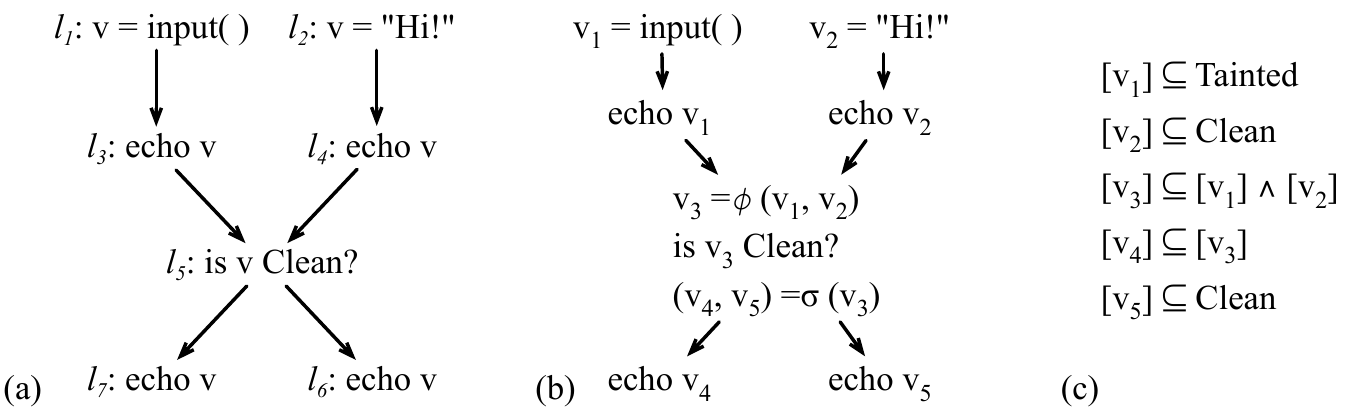}
\caption{
\label{fig:taintAnalysis}
\small Taint analysis is a forward data-flow analysis that takes information from the definitions of variables and conditional tests on these variables.
}
\end{figure}

~\\
\noindent
\textbf{Taint analysis:}
The objective of taint analysis~\cite{Rimsa11,Rimsa14} is to find program
vulnerabilities.
In this case, a harmful attack is possible when input data reaches sensitive
program sites without going through special functions called sanitizers.
Figure~\ref{fig:taintAnalysis} illustrates this type of analysis.
We have used $\phi$ and $\sigma$-functions to split the live ranges of the
variables in Figure~\ref{fig:taintAnalysis}(a) producing the program in
Figure~\ref{fig:taintAnalysis}(b).
Let us assume that {\em echo} is a sensitive function, because it is used to
generate web pages.
For instance, if the data passed to {\em echo} is a JavaScript program, then we
could have an instance of cross-site scripting attack.
Thus, the statement $\mathit{echo} \ v_1$ may be a source of vulnerabilities, as
it outputs data that comes directly from the program input.
On the other hand, we know that $\mathit{echo} \ v_2$ is always safe, for
variable $v_2$ is initialized with a constant value.
The call $\mathit{echo} \ v_5$ is always safe, because variable $v_5$ has been
sanitized; however, the call $\mathit{echo} \ v_4$ might be tainted, as variable
$v_4$ results from a failed attempt to sanitize $v$.
The def-use chains that we derive from the program representation lead
naturally to a constraint system, which we show in
Figure~\ref{fig:taintAnalysis}(c).
The intermediate representation that we create in this case is equivalent
to the {\em Extended Single Static Assignment} (e-SSA) form~\cite{Bodik00}.
It also suits the ABCD algorithm for array bounds-checking
elimination~\cite{Bodik00}, Su and Wagner's range analysis~\cite{Su05} and
Gawlitza {\em et al.}'s range analysis~\cite{Gawlitza09}.

\begin{figure}[t!]
\centering
\includegraphics[width=0.8\linewidth]{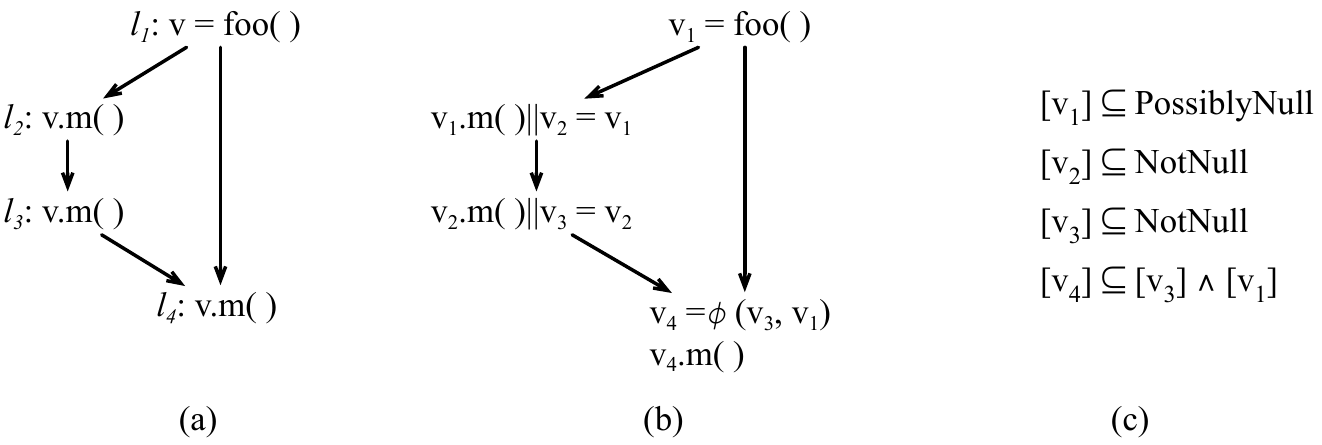}
\caption{\small Null pointer analysis as an example of forward data-flow analysis that
takes information from the definitions and uses of variables.}
\label{fig:nullAnalysis}
\end{figure}

~\\
\noindent
\textbf{Null pointer analysis: }
The objective of null pointer analysis is to determine which references may hold
null values.
Nanda and Sinha have used a variant of this analysis to find which method
dereferences may throw exceptions, and which may not~\cite{Nanda09}.
This analysis allows compilers to remove redundant null-exception tests and
helps developers to find null pointer dereferences.
Figure~\ref{fig:nullAnalysis} illustrates this analysis.
Because information is produced at use sites, we split live ranges after each
variable is used, as we show in Figure~\ref{fig:nullAnalysis}(b).
For instance, we know that the call $v_2.m()$ cannot result in a null pointer
dereference exception, otherwise an exception would have been thrown during the
invocation $v_1.m()$.
On the other hand, in Figure~\ref{fig:nullAnalysis}(c) we notice that the state
of $v_4$ is the meet of the state of $v_3$, definitely not-null, and the state
of $v_1$, possibly null, and we must conservatively assume that $v_4$ may be
null.

\section{Building the Intermediate Program Representation}
\label{sec:building}

A {\em live range splitting strategy} \
$\SS_v = I_\uparrow \cup I_\downarrow$ over a variable $v$ consists of two sets
of \splitpoints (see Section~\ref{sub:split} for a definition of \splitpoints).
We let $I_\downarrow$ denote a set of \splitpoints that produce
information for a forward analysis.
Similarly, we let $I_\uparrow$ denote a set of \splitpoints that are
interesting for a backward analysis.
The live-range of $v$ must be split at least at every \splitpoint in $\SS_v$.
Going back to the examples from Section~\ref{sub:examples}, we have the live
range splitting strategies enumerated below.
Further examples are given in Figure~\ref{fig:splittingSt}.

\begin{itemize}
\item \textbf{Class inference} is a backward analysis that takes information
from the uses of variables.
Thus, for each variable, the live-range splitting strategy contains the set
of \splitpoints where that variable is used.
For instance, in Figure~\ref{fig:classInference}(b), we have that
$\SS_{v} = \{l_4, l_6,l_7\}_\uparrow$.

\item \textbf{Constant propagation} is a forward analysis that takes
information from definition sites.
Thus, for each variable $v$, the live-range splitting strategy is
characterized by the set of points where $v$ is defined.
For instance, in Figure~\ref{fig:constantProp}(b), we have that
$\SS_{b} = \{l_2, l_5\}_\downarrow$.

\item \textbf{Taint analysis} is a forward analysis that takes information
from \splitpoints where variables are defined, and conditional tests that use these
variables.
For instance, in Figure~\ref{fig:taintAnalysis}(a), we have that
$\SS_{v} = \{l_1, l_2, \Out(l_5)\}_\downarrow$. 

\item Nanda {\em et al.}'s \textbf{null pointer analysis}~\cite{Nanda09} is a
forward flow problem that takes information from definitions and uses.
For instance, in Figure~\ref{fig:nullAnalysis}(a), we have that
$\SS_{v} = \{l_1, l_2, l_3, l_4\}_\downarrow$.
\end{itemize}

\begin{figure}[t!]
\begin{center}
\begin{small}
\renewcommand\arraystretch{1.4}
\begin{tabular}{| c | c |} \hline
{\bf Client} & {\bf Splitting strategy $\SS$} \\ \hline 
Alias analysis, reaching definitions & $\mathit{Defs}_\downarrow$ \\ 
cond. constant propagation~\cite{Wegman91} &  \\ \hline  
Partial Redundancy Elimination~\cite{Ananian99,Singer06} & $\mathit{Defs}_\downarrow \bigcup \mathit{LastUses}_\uparrow$ \\ \hline 
ABCD~\cite{Bodik00}, taint analysis~\cite{Rimsa11},  & $\mathit{Defs}_\downarrow \bigcup \mathit{\Out(Conds)}_\downarrow$ \\ 
range analysis~\cite{Su05,Gawlitza09} & \\ \hline 
Stephenson's bitwidth analysis~\cite{Stephenson00} & $\mathit{Defs}_\downarrow \bigcup \mathit{\Out(Conds)}_\downarrow \bigcup \mathit{Uses}_\uparrow$  \\ \hline 
Mahlke's bitwidth analysis~\cite{Mahlke01} & $\mathit{Defs}_\downarrow \bigcup \mathit{Uses}_\uparrow$  \\ \hline 
An's type inference~\cite{An11}, class inference~\cite{Chambers89} & $\mathit{Uses}_\uparrow$ \\ \hline 
Hochstadt's type inference~\cite{Hochstadt08} & $\mathit{Uses}_\uparrow \bigcup \mathit{\Out(Conds)}_\uparrow$ \\ \hline 
Null-pointer analysis~\cite{Nanda09} & $\mathit{Defs}_\downarrow \bigcup\mathit{Uses}_\downarrow$ \\ \hline
\end{tabular} \end{small} 
\caption{\small Live range splitting strategies for different data-flow analyses.
We use $\mathit{Defs}$ (resp. $\mathit{Uses}$) to denote the set of instructions that define (resp. use) the variable; $\mathit{Conds}$ to denote the set of instructions that apply a conditional test on a variable; $\Out(\mathit{Conds})$ the exits of the corresponding basic blocks; $\mathit{LastUses}$ to denote the set of instructions where a variable is used, and after which it is no longer live.}
\label{fig:splittingSt} \end{center} \end{figure}

\def\SSIfy{\textsf{SSIfy}}

\begin{figure}[htbp]
\begin{small}
\begin{tabular}{rl}
$_1$& \textsf{function \SSIfy}(var \var{v}, Splitting\_Strategy $\SS_v$)\\
$_2$& \1\textsf{split}($v$, $\SS_v$)\\
$_3$& \1\textsf{rename}($v$)\\
$_4$& \1\textsf{clean}($v$)\\
\end{tabular}
\end{small}
\caption{\label{fig:SSIfy} \small Split the live ranges of $v$ to convert it to SSI form}
\end{figure}

The algorithm \textsf{\SSIfy} in Figure~\ref{fig:SSIfy} implements a
live range splitting strategy in three steps:
\textsf{split}, \textsf{rename} and \textsf{clean}, which we describe in
the rest of this section.

~\\
\noindent
\textbf{Splitting live ranges through the creation of new definitions
of variables: }
To implement $\SS_v$, we must split the live ranges of $v$ at each
\splitpoint listed by $\SS_v$.
However, these \splitpoints are not the only ones where splitting might be
necessary.
As we have pointed out in Section~\ref{sub:split}, we might have, for the same original variable, many different sources of information reaching a common meet point.
For instance, in Figure~\ref{fig:taintAnalysis}(b), there exist two
definitions of variable $v$: $v_1$ and $v_2$, that reach the use of $v$ at $l_5$.
Information that flows forward from $l_3$ and $l_4$ collide at $l_5$,
the meet point of the if-then-else.
Hence the live-range of $v$ has to be split at the entry of $l_5$, e.g.,
at $\In(l_5)$, leading to a new definition $v_3$.
In general, the set of \splitpoints where information collide can be easily
characterized by join sets~\cite{Cytron91}.
The join set of a group of nodes $P$ contains the CFG nodes that can be
reached by two or more nodes of $P$ through disjoint paths.
Join sets  can be
over-approximated by the notion of iterated dominance frontier~\cite{Weiss92},
a core concept in SSA construction algorithms, which, for the sake of
completeness, we recall below:
\begin{itemize}
\item {\bf Dominance}:
a CFG node $n$ dominates a node $n'$ if every program path from the entry node
of the CFG to $n'$ goes across $n$.
If $n \neq n'$, then we say that $n$ {\em strictly} dominates $n'$.

\item {\bf Dominance frontier} ($DF$):
a node $n'$ is in the dominance frontier of a node $n$ if $n$ dominates a predecessor of $n'$, but does not strictly dominate $n'$.

\item {\bf Iterated dominance frontier} ($\mathit{DF}^+$):
the iterated dominance frontier of a node $n$ is the limit of the sequence:
\begin{eqnarray*}
\begin{array}{ccl}
DF_1 & = & DF(n) \\ DF_{i + 1} & = & DF_i \cup \{DF(z) \ | \ z \in DF_i\} \end{array}
\end{eqnarray*}
\end{itemize}
Similarly, split sets created by the backward propagation of information can
be over-approximated by the notion of {\em iterated post-dominance
frontier} ($\mathit{pDF^+}$), which is the 
$\mathit{DF^+}$~\cite{Appel02} of the CFG where orientation of edges have been reverted.
If $e = (u,v)$ is an edge in the control flow graph, then we
define the dominance frontier of $e$, i.e., $DF(e)$, as the dominance
frontier of a fictitious node $n$ placed at the middle of $e$.
In other words, $DF(e)$ is $DF(n)$, assuming that $(u,n)$ and
$(n,v)$ would exist.
Given this notion, we also define $DF^+(e)$, $pDF(e)$ and $pDF^+(e)$.

\begin{figure}[t!]
\begin{small}
\begin{tabular}{rl}
$_1$&{\sf function split}(var \var{v}, Splitting\_Strategy
$\SS_v = I_\downarrow \cup I_\uparrow$)\\
$_2$&\1 ``compute the set of split points"\\
$_3$&\1$S_\uparrow = \emptyset$\\
$_4$&\1\Foreach $i \in I_\uparrow$:\\
$_5$&\1\1 \If $i.\isjoin$:\\
$_6$&\1  \2 \Foreach $e\in \incoming(i)$:\\
$_7$&\1     \3  $S_\uparrow = S_\uparrow \bigcup \Out(pDF^+(e))$\\
$_8$&\1\1 \Else:\\
$_9$&\1  \2 $S_\uparrow = S_\uparrow \bigcup \Out(pDF^+(i))$\\
$_{10}$&\1$S_\downarrow = \emptyset$\\
$_{11}$&\1\Foreach $i \in S_\uparrow \bigcup \Defs(v) \bigcup I_\downarrow$:\\
$_{12}$&\1\1 \If $i.\isfork$:\\
$_{13}$&\1  \2 \Foreach $e \in \outgoing(i)$\\
$_{14}$&\1      \3 $S_\downarrow = S_\downarrow \bigcup \In(DF^+(e))$\\
$_{15}$&\1\1 \Else:\\
$_{16}$&\1  \2 $S_\downarrow = S_\downarrow \bigcup \In(DF^+(i))$\\
$_{17}$&\1$S = \SS_v \bigcup S_\uparrow \bigcup S_\downarrow$\\
$_{18}$&\1 ``Split live range of $v$ by inserting $\phi$, $\sigma$, and copies"\\
$_{19}$&\1\Foreach  $i \in S$:\\
$_{20}$&\1\1 \If $i$ does not already contain any definition of $v$:\\
$_{21}$&\1   \2  \If $i.\isjoin$: insert ``$v=\phi(v,...,v)$" at $i$\\
$_{22}$&\1   \2  \Else \If $i.\isfork$: insert ``$(v,...,v)= \sigma(v)$" at  $i$\\
$_{23}$&\1   \2 else: insert a copy ``$v=v$" at $i$\\
\end{tabular}
\caption{\label{fig:Spliting} \small Live range splitting. We use $\In(l)$ to denote a  \splitpoint at the entry of $l$, and $\Out(l)$ to denote a \splitpoint at the exit of $l$.
We let $\In(S) = \{\In(l)\ |\ l\in S\}$.
$\Out(S)$ is defined in a similar way.
}
\end{small}
\end{figure}

Figure~\ref{fig:Spliting} shows the algorithm that creates new definitions of
variables.
This algorithm has three phases.
First, in lines 3-9 we create new definitions to split the live ranges
of variables due to \emph{backward} collisions of information.
These new definitions are created at the iterated post-dominance
frontier of \splitpoints that originate information.
Notice that if the \splitpoint is a join (entry of a CFG node), 
information actually originate from each incoming edges (line 6).
In lines 10-16 we perform the inverse operation: we create new definitions of
variables due to the forward collision of information.
Finally, in lines 17-23 we actually insert the new definitions of $v$.
These new definitions might be created by $\sigma$ functions (due exclusively
to the splitting in lines 3-9); by $\phi$-functions (due exclusively to the
splitting in lines 10-16); or by parallel copies.
Contrary to Singer's algorithm, originally designed to produce SSI form
programs, we do not iterate between the insertion of $\phi$ and $\sigma$
functions.

The Algorithm {\sf split} preserves the SSA property, even for data-flow
analyses that do not require it.
As we see in line 11, the loop that splits meet nodes forwardly include, by
default, all the definition sites of a variable.
We chose to implement it in this way for practical reasons: the SSA
property gives us access to a fast liveness check~\cite{Benoit08}, which is
useful in actual compiler implementations.
This algorithm inserts $\phi$ and $\sigma$ functions conservatively.
Consequently, we may have these special instructions at \splitpoints that
are not true meet nodes.
In other words, we may have a $\phi$-function $v = \phi(v_1, v_2)$, in which
the abstract states of $v_1$ and $v_2$ are the same in a final solution of
the data-flow problem.

\begin{figure}[t!]
\begin{small}
\begin{tabular}{rl}
$_{1}$&{\sf function rename}(var $v$)\\
$_{2}$&\1 ``Compute use-def \& def-use chains"\\
$_{3}$&\1 ``We consider here that $\textit{stack}.\textrm{peek}()=\undef$ if
\textit{stack}.\textrm{isempty}(),\\
$_{4}$&\1~~~and that $\Def(\undef)=\textit{entry}$"\\
$_{5}$&\1$\textit{stack} = \emptyset$\\
$_{6}$&\1\Foreach CFG node $n$ in dominance order:\\
$_{7}$&\1\1 \Foreach $m$ that is a predecessor of $n$:\\
$_{8}$&\1     \2\If exists $d_m$ of the form ``$l^m:v=\dots$'' in a $\sigma$-function in $\Out(m)$:\\
$_{9}$&\1     \3 $\textit{stack}.\textrm{set\_def}(d_m)$\\
$_{10}$&\1  \2 \If exits $u_m$ of the form ``$\dots=l^m:v$'' in a $\phi$-function in $\In(n)$:\\
$_{11}$&\1     \3 $\textit{stack}.\textrm{set\_use}(u_m)$\\  
$_{12}$&\1\1 \If exists a $\phi$-function $d$ in $\In(n)$ that defines $v$:\\
$_{13}$&\1  \2 $\textit{stack}.\textrm{set\_def}(d)$\\
$_{14}$&\1\1 \Foreach instruction $u$ in $n$ that uses $v$:\\
$_{15}$&\1  \2 $\textit{stack}.\textrm{set\_use}(u)$\\
$_{16}$&\1\1 \If exists an instruction $d$ in $n$ that defines $v$:\\
$_{17}$&\1  \2 $\textit{stack}.\textrm{set\_def}(d)$\\
$_{18}$&\1\1 \Foreach $\sigma$-function $u$ in $\Out(n)$ that uses $v$:\\
$_{19}$&\1  \2 $\textit{stack}.\textrm{set\_use}(u)$\\
\end{tabular} \\ \\

\begin{tabular}{rl}
$_{21}$&{\sf function stack.set\_use}(instruction \var{inst}):\\
$_{22}$&\1\While $\Def(\textit{stack}.\textrm{peek()})$ does not dominate \var{inst}: \textit{stack}.\textrm{pop()}\\
$_{23}$&\1$v_i = \textit{stack}.\textrm{peek()}$\\
$_{24}$&\1replace the uses of $v$ by $v_i$ in \var{inst}\\
$_{25}$&\1\If $v_i\neq \undef$: set $\Uses(v_i)=\Uses(v_i) \bigcup inst$
\end{tabular} \\ \\

\begin{tabular}{rl}
$_{27}$&{\sf function stack.set\_def}(instruction \var{inst}):\\
$_{28}$&\1let $v_i$ be a fresh version of $v$\\
$_{29}$&\1replace the defs of $v$ by $v_i$ in \var{inst}\\
$_{30}$&\1set $\Def(v_i)= inst$\\
$_{31}$&\1$\textit{stack}.\textrm{push}(v_i)$
\end{tabular}
\end{small}
\caption{\label{fig:Rename} \small Versioning} 
\end{figure}

~\\
\noindent
\textbf{Variable Renaming: }
The algorithm in Figure~\ref{fig:Rename} builds def-use and use-def chains
for a program after live range splitting.
This algorithm is similar to the standard algorithm used to rename variables
during the SSA construction~\cite[Algorithm 19.7]{Appel02}.
To rename a variable $v$ we traverse the program's dominance tree, from top to
bottom, stacking each new definition of $v$ that we find.
The definition currently on the top of the stack is used to replace all the
uses of $v$ that we find during the traversal.
If the stack is empty, this means that the variable is not defined at that
point.
The renaming process replaces the uses of undefined variables by $\undef$
(line~3). 
We have two methods, \textit{stack}.\textrm{set\_use} and \textit{stack}.\textrm{set\_def} to build
the chain relations between the variables.
Notice that sometimes we must rename a single use inside a $\phi$-function,
as in lines~10-11 of the algorithm.
For simplicity we consider this single use as a simple
assignment when calling \textit{stack}.\textrm{set\_use}, as one can see in line 11.
Similarly, if we must rename a single definition inside a
$\sigma$-function, then we treat it as a simple assignment, like we do in
lines 8-9 of the algorithm.

\begin{figure}[t!]
\begin{small}
\begin{tabular}{rl}
$_{1}$ & \textsf{function clean}(var $v$)\\
$_{2}$ & \1 \Let \var{web} = $\{ v_i \ | \ v_i \textrm{ is a version of } v \}$\\
$_{3}$ & \1 \Let \var{defined} = $\emptyset$\\
$_{4}$ & \1 \Let \var{active} = \{ $\var{inst} \ | \ \var{inst}$ is actual instruction and $\var{web}\cap \var{inst}.\textrm{defs}  \neq \emptyset \}$\\
$_{5}$ & \1 \While exists {\em inst}  in {\em active} s.t. {\em web} $\cap$
{\em inst}.\textrm{defs} $\backslash$  {\em defined} $\neq \emptyset$:\\
$_{6}$ & \1  \1 \Foreach $v_i \in \var{web} \cap \var{inst}.\textrm{defs} \backslash \var{defined}$: \\
$_{7}$ & \1     \2 $\var{active} = \var{active} \cup \Uses(v_i)$ \\
$_{8}$ & \1     \2 $\var{defined} = \var{defined} \cup \{ v_i \}$ \\
$_{9}$ & \1 \Let $\var{used} = \emptyset$\\
$_{10}$ & \1 \Let $\var{active} = \{ \var{inst} \ | \var{inst}$ is actual instruction and $\var{web} \cap \var{inst}.\textrm{uses} \neq \emptyset \}$\\
$_{11}$ & \1 \While exists $\var{inst} \in \var{active}$ s.t.
$\var{inst}.\textrm{uses} \backslash  \var{used} \neq \emptyset$:\\
$_{12}$ & \1  \1 \Foreach $v_i \in \var{web} \cap \var{inst}.\textrm{uses} \backslash \textrm{used}$: \\
$_{13}$ & \1     \2 $\var{active} = \var{active} \cup \Def(v_i)$ \\
$_{14}$ & \1     \2 $\var{used} = \var{used} \cup \{ v_i \}$ \\
$_{15}$ & \1 \Let $\var{live} = \var{defined} \cap \var{used}$ \\
$_{16}$ & \1 \Foreach non actual $\var{inst} \in \Def(\var{web})$:\\
$_{17}$ & \1  \1 \Foreach $v_i$ operand of \var{inst} s.t. $v_i \notin
\var{live}$:\\
$_{18}$ & \1             \4 replace $v_i$ by $\undef$\\
$_{19}$ & \1  \1 \If $\var{inst}.\textrm{defs}=\{\undef\}$ or $\var{inst}.\textrm{uses}=\{\undef\}$\\
$_{20}$ & \1  \2 eliminate \var{inst} from the program\\
\end{tabular}
\end{small}
\caption{\label{fig:clean} \small Dead and undefined code elimination. Original instructions not inserted by \textsf{split} are called \emph{actual} instruction. We let {\em inst}.\textrm{defs} denote the set of variables defined by {\em inst}, and {\em inst}.\textrm{uses} denote the set of variables used by {\em inst}.}
\end{figure}

~\\
\noindent
\textbf{Dead and Undefined Code Elimination: }
The algorithm in Figure~\ref{fig:clean} eliminates $\phi$-functions that define variables not actually used in the code, $\sigma$-functions that use variables not actually defined in the code, and parallel copies that either define or use variables that do not reach any actual instruction.
``Actual'' instructions are those instructions that already existed in the program before we transformed it with {\sf split}.
In line~3 we let ``\var{web}'' be the set of versions of $v$, so as to restrict the cleaning process to variable~$v$, as we see in lines~4-6 and lines~10-12.
The set ``\var{active}'' is initialized to actual instructions in line~4.
Then, during the loop in lines~5-8 we add to active $\phi$-functions, $\sigma$-functions, and copies that can reach actual definitions through use-def chains.
The corresponding version of $v$ is then marked as \emph{defined} (line~8).
The next loop, in lines 11-14 performs a similar process to add to the active set the instructions that can reach actual uses through def-use chains.
The corresponding version of $v$ is then marked as \emph{used} (line~14).
Each non live variable (see line~15), i.e. either undefined or dead (non used) is replaced by $\undef$ in all $\phi$, $\sigma$, or copy functions where it appears.
This is done in lines~15-18.
Finally useless $\phi$, $\sigma$, or copy functions are removed in lines~19-20. 
As a historical curiosity, Cytron {\em et al.}'s procedure to build SSA form
produced what is called {\em the minimal representation}~\cite{Cytron91}.
Some of the $\phi$-functions in the minimal representation define variables
that are never used.
Briggs {\em et al.}~\cite{Briggs94} remove these variables; hence, producing what compiler writers normally call {\em pruned SSA-form}.
We close this section stating that the {\sf SSIfy} algorithm preserves the
semantics of the modified program~\footnote{The theorems in the main part of this paper
are proved in the appendix}:

\begin{theorem}[Semantics]
\label{theo:semantics}
{\sf SSIfy} maintains the following property:
if a value $n$ written into variable $v$ at \splitpoint $i'$ is read at a
\splitpoint $i$ in the original program, then the same value assigned to a
version of variable $v$ at \splitpoint $i'$ is read at a \splitpoint $i$
after transformation.
\end{theorem}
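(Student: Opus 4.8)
The plan is to track a single dynamic read and show that the three phases of \textsf{SSIfy} each preserve the value that flows into it. First I would fix a concrete execution trace of the original program and a use of $v$ at \splitpoint $i$ that reads a value $n$ written at \splitpoint $i'$; by definition of liveness there is a $v$-clear path in the CFG from $i'$ to $i$ along that trace. The claim to establish is that after transformation, the corresponding execution of the rewritten program reads, at (the image of) $i$, a version $v_k$ of $v$ holding the same $n$.

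The key steps, in order. (1) \textbf{\textsf{split} is value-transparent.} Every instruction that \textsf{split} inserts is either a copy $v=v$, a $\phi$-function $v=\phi(v,\dots,v)$, or a $\sigma$-function $(v,\dots,v)=\sigma(v)$, all acting on the \emph{same} name $v$. Each of these is semantically an identity on $v$: a parallel copy reads the old $v$ and writes $v$; a $\phi$ selects among values all equal to the current $v$; a $\sigma$ copies $v$ to $v$ along whichever edge is taken. Hence the sequence of values held by $v$ along the trace is unchanged, and the $v$-clear path from $i'$ to $i$ is still $v$-clear (no inserted instruction kills the value). (2) \textbf{\textsf{rename} follows use-def chains faithfully.} The renaming algorithm (Figure~\ref{fig:Rename}) walks the dominance tree with a stack, so at the moment it rewrites the use at $i$ the stack top is the version $v_k$ created at the closest definition dominating $i$ on the traversal; I would argue by induction on the dominance order that this is exactly the definition that is last executed before $i$ on \emph{any} trace — this is the standard SSA-renaming invariant, here extended to $\phi$/$\sigma$ arguments via the \textrm{set\_use}/\textrm{set\_def} bookkeeping in lines 8--11. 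Combined with step (1), the dynamic value in $v_k$ at $i$ equals the dynamic value that $v$ had at $i$ in the original program, namely $n$. (3) \textbf{\textsf{clean} does not remove anything on the used path.} The version $v_k$ read at an actual instruction $i$ is, by construction, reachable from an actual use through def-use chains, so it lands in \var{used}; and tracing its use-def chain back to $i'$ (again actual or chained to an actual def) puts every version on that chain in \var{defined}. Thus every version on the $i'$-to-$i$ chain is in \var{live} and no instruction along it is deleted or has an operand replaced by \undef. Chaining the three steps gives the theorem.

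I would structure the write-up as three lemmas matching these phases, with the trace-based liveness fact stated up front, and then a one-line composition. The main obstacle is step (2): making the SSA-renaming invariant precise in the presence of $\phi$- and $\sigma$-functions and of \emph{parallel} copies. The subtleties are that a $\sigma$-function's defs live in $\Out(m)$ and are consumed on a specific outgoing edge, a $\phi$-function's uses live in $\In(n)$ on specific incoming edges, and the interior parallel copies must be read-all-then-write-all; I need the renaming traversal order (lines 6--19 of Figure~\ref{fig:Rename}) to visit $\sigma$-uses, block body, and $\phi$-defs in exactly the order they execute dynamically, so that "stack top dominates \var{inst}" in \textrm{set\_use} (line 22) really does pick the dynamically-last definition. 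Once that correspondence between static traversal order and dynamic execution order is nailed down, steps (1) and (3) are routine case analyses over the three kinds of inserted instructions.
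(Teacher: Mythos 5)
Your proposal is correct in outline and matches the paper's decomposition into the three phases of \textsf{SSIfy}, but it takes a genuinely different route on the crucial \textsf{rename} step. The paper's proof is deliberately indirect there: it first extends the word ``copy'' to cover $\phi$- and $\sigma$-functions, observes that \textsf{split} only inserts copies and \textsf{clean} only removes them, and then invokes two liveness lemmas --- live range preservation (Lemma~\ref{lem:live_pres}: some version of $v$ is live at $i$ after the transformation) and non-overlapping (Lemma~\ref{lem:live_over}: at most one version of $v$ is live at any \progpoint) --- to conclude that \textsf{rename} ``cannot send new values to $i$'': since the copies are value-transparent and exactly one version is live at $i$, that version must carry $n$. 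You instead propose to prove the standard SSA-renaming invariant directly, by induction on the dominance-order traversal, showing that the stack top at the moment a use is rewritten names the dynamically-last definition on any trace. Your route is more operational and self-contained, and it makes explicit the traversal-order/execution-order correspondence for $\sigma$-defs at $\Out(m)$, $\phi$-uses at $\In(n)$, and parallel copies --- exactly the subtleties the paper's non-overlapping lemma quietly absorbs (its proof is where the dominance/stack reasoning actually lives in the paper). What the paper's approach buys is brevity and reuse: the two lemmas are stated once and also feed other results; what yours buys is a tighter, trace-based justification of the step the paper leaves most implicit. Your treatment of \textsf{clean} (tracking the def-use/use-def closure to show every version on the $i'$-to-$i$ chain lands in \emph{live}) is likewise more detailed than the paper's one-line ``it only removes copies,'' and is a worthwhile addition. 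The one caveat is that your step (2) is still a plan: to count as a proof you would need to actually carry out the induction you describe, whereas the paper discharges that burden onto Lemma~\ref{lem:live_over}.
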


~\\
\noindent
\textbf{The Propagation Engine: }
Def-use chains can be used to solve, sparsely, a PLV problem about any program
that fulfills the SSI property.
However, in order to be able to rely on these def-use chains, we need to
derive a sparse constraint system from the original - dense - system.
This sparse system is constructed according to Definition~\ref{def:ssi_eq}.
Theorem~\ref{theo:ssify} states that such a system exists for any program,
and can be obtained directly from the Algorithm {\sf SSIfy}.
The algorithm in Figure~\ref{fig:propback} provides worklist based solvers for
backward and forward sparse data-flow systems built as in
Definition~\ref{def:ssi_eq}.

\begin{definition}[SSI constrained system]
\label{def:ssi_eq}
Let $E^{\textit{ssi}}_{\textit{dense}}$ be a forward (resp. backward) constraint system extracted from a program that meets
the SSI properties.
Hence, for each pair (variable $v$, \progpoint $p$) we have
equations $[v]^p = [v]^p \wedge F_v^{s,p}([v_1]^s, \dots, [v_n]^s)$.
We define a system of sparse equations $E^{\textit{ssi}}_{\textit{sparse}}$ as follows:
\begin{itemize}
\item Let $\{a, \ldots, b\}$ be the variables used (resp. defined) at \splitpoint $i$, where variable $v$ is defined (resp. used). Let $s$ and $p$ be the \progpoints around $i$.
The LINK property ensures that $F^{s,p}_v$ depends only on some
$[a]^s \dots [b]^s$.
Thus, there exists a function $G^i_v$ defined as the projection of
$F^{s,p}_v$ on ${\cal L}_a\times \dots \times{\cal L}_b$, such that
$G^i_v([a]^s, \dots, [b]^s) = F^{s,p}_v([v_1]^s,\dots, [v_n]^s)$.

\item The sparse constrained system associates with each variable $v$, and each 
definition (resp. use) point $i$ of $v$, the corresponding constraint
$[v]  \sqsubseteq G_v^i([a], \ldots, [b])$ where $a,\dots, b$ are used
(resp. defined) at $i$.
\end{itemize}
\end{definition}

\begin{theorem}[Correctness of SSIfy]
\label{theo:ssify}
The execution of~\textsf{\SSIfy}($v,\,\SS_v$), for every variable $v$ in the
target program, creates a new program representation such that:
\begin{enumerate}
\item there exists a system of equations $E^{\textit{ssi}}_{\textit{dense}}$, isomorphic to $E_{\textit{dense}}$ for which the new program representation fulfills the SSI property.
\item if $E_{\textit{dense}}$ is monotone then $E^{\textit{ssi}}_{\textit{dense}}$ is also monotone.
\end{enumerate}
\end{theorem}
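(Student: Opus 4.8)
The plan is to split the argument along the two claims, after first pinning down the sense in which the transformed program is ``the same'' program. First I would make the bookkeeping of \SSIfy\ explicit: for a fixed original variable $v$, the calls to \textsf{split} and \textsf{rename} replace $v$ by finitely many versions $v_1,\dots,v_k$, and the dominance-tree traversal of Figure~\ref{fig:Rename} assigns to every \progpoint $p$ of the (unchanged) CFG exactly one version of $v$ that is in scope at $p$ --- the one on top of the stack, or $\undef$ when the stack is empty. This yields a map $\rho$ from (version,\,\progpoint) pairs of the new program onto (variable,\,\progpoint) pairs of the old one. I would then \emph{define} $E^{\textit{ssi}}_{\textit{dense}}$ by transporting $E_{\textit{dense}}$ through $\rho$: on an edge $(s,p)$ the component acting on version $v_j$ is the original $v$-component $F_v^{s,p}$, with each argument $[u]^s$ read off the version of $u$ in scope at $s$ (and $\top$ substituted when $u$ is out of scope, matching the $\undef$ convention). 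Because \SSIfy\ never alters control flow --- it only inserts instructions into, or relabels defs/uses inside, existing CFG nodes, and the $\phi$/$\sigma$ instructions it adds live at $\In$/$\Out$ points that do not change the edge set --- this is a pure relabelling, so $E^{\textit{ssi}}_{\textit{dense}}$ is literally isomorphic to $E_{\textit{dense}}$ and its maximum fixed point is the $\rho$-image of that of $E_{\textit{dense}}$; Theorem~\ref{theo:semantics} is exactly what certifies that this relabelling is coherent (a value read in the original is read through the renamed version in the transform), so the isomorphism is more than formal.

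With $E^{\textit{ssi}}_{\textit{dense}}$ fixed, the second step is to verify the five clauses of Definition~\ref{def:ssi} for an arbitrary version $v_j$. \textbf{[INFO]} is immediate: outside $\textrm{live}(v_j)$ there is no downstream use of $v_j$, so renaming leaves that region free of $v_j$ and the construction puts $\lambda x.\top$ there. \textbf{[VERSION]} follows because \SSIfy\ preserves SSA form (line~11 of \textsf{split} seeds $S_\downarrow$ with all of $\Defs(v)$), so each $v_j$ has a unique definition dominating all its uses, whence a standard argument gives connectedness of $\textrm{live}(v_j)$. \textbf{[SPLIT-MEET]} is where the iterated dominance frontier does the real work: if $p$ is a meet node for $v_j$ in the sense of Definition~\ref{def:merge}, then fixed-point-distinct values reach $p$ along disjoint paths from \splitpoints that produce information, so $p$ lies in a join set of such points and hence in the matching $DF^+$ (forward) or $pDF^+$ (backward) set computed in lines~10--16, resp.\ 3--9, of \textsf{split}; therefore a $\phi$ (resp.\ $\sigma$) for $v$ was inserted at $p$ and becomes, after renaming, the required def with $n$ uses (resp.\ use with $n$ defs). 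Here the \emph{edge}-indexed frontiers $DF^+(e)$, $pDF^+(e)$ used when the \splitpoint is itself a join/fork are precisely what makes the argument close when information originates on an edge rather than at a node. \textbf{[SPLIT-DEF]} and \textbf{[LINK]} go the same way: a pair of consecutive \progpoints across which $F_v^{s,p}$ is non-trivial/undefined, resp.\ depends on some $[u]^s$, lies between a \splitpoint that the client placed into $\SS_v$ (cf.\ Figure~\ref{fig:splittingSt}) --- or one of its forward/backward iterated frontiers --- so after \textsf{split} and \textsf{rename} there is a (possibly pseudo) def, resp.\ use, of the relevant variable there. Finally I would check that \textsf{clean} (Figure~\ref{fig:clean}) destroys none of these witnesses: it removes only $\phi$/$\sigma$/copies all of whose operands are $\undef$ or that reach no actual instruction through def-use/use-def chains, and such instructions are by construction irrelevant to [SPLIT-DEF], [SPLIT-MEET] and [LINK] for a version that is genuinely live.

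For the second claim I would observe that every transfer function of $E^{\textit{ssi}}_{\textit{dense}}$ is built from one of $E_{\textit{dense}}$ by projecting onto a single output coordinate, precomposing with the maps that select the in-scope versions --- each of which is either a coordinate projection or the constant $\top$, hence monotone --- and possibly the pointwise meet with the current value that appears in the constraint form of Definition~\ref{def:dfa}. Projections, coordinate insertions, constants, $\wedge$ and composition all preserve monotonicity, so monotonicity of $E_{\textit{dense}}$ transfers to $E^{\textit{ssi}}_{\textit{dense}}$.

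The hard part will be [SPLIT-MEET] together with [SPLIT-DEF]: making rigorous that the over-approximation of join sets by (edge-)iterated dominance and post-dominance frontiers is adequate, and --- since, unlike Singer's construction, \textsf{split} does \emph{not} iterate between $\phi$- and $\sigma$-insertion --- that feeding the backward output $S_\uparrow$ into the forward pass at line~11 before the latter computes its own frontiers already suffices. A secondary, more clerical obstacle is carrying the forward/backward duality ($\In$ vs $\Out$, $\phi$ vs $\sigma$, predecessors vs successors) uniformly through all five clauses so that a single argument covers both the forward and the backward PLV cases.
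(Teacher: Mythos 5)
Your plan follows essentially the same route as the paper's proof: derive $E^{\textit{ssi}}_{\textit{dense}}$ by attaching trivial transfer functions to the inserted copies and renaming constraint variables along with program variables, check the clauses of Definition~\ref{def:ssi} one by one against the relevant lines of \textsf{split} and \textsf{rename}, and obtain monotonicity because only trivial and constant functions are added. The only substantive difference is that the paper proves the theorem under an explicit hypothesis (stated just before the proof in the appendix) that the original program already fulfills INFO and LINK and that $\SS_v$ contains every \splitpoint where $F_v^{s,p}$ is non-trivial --- which is what makes SPLIT-DEF and LINK immediate --- and the two points you flag as ``hard'' (adequacy of the $DF^+$/$pDF^+$ over-approximation of join sets, and the soundness of not iterating between $\phi$- and $\sigma$-insertion) are precisely the ones the paper also asserts without further argument.
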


\def\1{\qquad}
\def\2{\1\1}
\def\3{\2\1}
\def\4{\2\2}
\def\5{\3\2}
\def\6{\4\2}
\def\7{\5\2}
\def\8{\6\2}
\def\9{\7\2}
\def\If{{\sf  if }}
\def\Let{{\sf  let }}
\def\Then{{\sf  then }}
\def\Else{{\sf  else}}
\def\Foreach{{\sf foreach }}
\def\For{{\sf for }}
\def\While{{\sf while }}

\newcommand\val[1]{[#1]}
\begin{figure}[t!]
\begin{small}
\begin{tabular}{rl}
$_1$ & \textsf{function forward\_propagate}(transfer\_functions $\cal G$)\\
$_2$ & \1$\var{worklist} = \emptyset$\\
$_3$ & \1\Foreach variable $v$: $\val{v}=\top$\\
$_4$ & \1\Foreach instruction $i$: $\var{worklist}\ +\hspace{-0.3em}= i$\\
$_5$ & \1\While $\var{worklist}\neq \emptyset$:\\
$_6$ & \1\1 \Let $i \in \var{worklist}$ \\
$_7$ & \1\1 $\var{worklist}\ -\hspace{-0.3em}= i$\\
  $_8$ & \1\1 \Foreach $v \in i.\textrm{defs}$:\\
$_9$ & \1  \2  $\val{v}_{new} = \val{v} \wedge G_v^i(\val{i.\textrm{uses}})$\\
$_{10}$&  \1 \2  \If $\val{v} \neq \val{v}_{new}$: \\
$_{11}$& \1   \3 $\var{worklist}\ +\hspace{-0.3em}=\Uses(v)$\\
$_{12}$& \1   \3 $\val{v} = \val{v}_{new}$\\
\end{tabular}
\end{small}
\caption{\label{fig:propback} \small Forward propagation engine under SSI.
For backward propagation, we replace $i$.defs by $i$.uses, $i$.uses by $i$.defs, and $\Uses(v)$ by $\Def(v)$}
\end{figure}

\section{Our Approach vs Other Sparse Evaluation Frameworks}
\label{sec:rel}

There have been previous efforts to provide theoretical and practical
frameworks in which data-flow analyses could be performed sparsely.
In order to clarify some details of our contribution, this section compares it
with three previous approaches:
Choi's Sparse Evaluation Graphs, Ananian's Static Single Information form and
Oh's Sparse Abstract Interpretation Framework.

~\\
\noindent
\textbf{Sparse Evaluation Graphs: }
Choi's {\em Sparse Evaluation Graphs}~\cite{Choi91} are one of the earliest
data-structures designed to support sparse analyses.
The nodes of this graph represent program regions where information produced by
the data-flow analysis might change.
Choi {\em et al.}'s ideas have been further expanded, for example, by Johnson
{\em et al.}'s {\em Quick Propagation Graphs}~\cite{Johnson93}, or Ramalingan's
{\em Compact Evaluation Graphs}~\cite{Ramalingan02}.
Nowadays we have efficient algorithms that build such
data-structures~\cite{Johnson94,Pingali97}.
These graphs improve many data-flow analyses in terms of runtime and memory
consumption.
However, they are more limited than our approach, because they
can only handle sparsely problems that Zadeck has classified as
{\em Partitioned Variable} (PVP).
In these problems, a program variable can be analyzed independently from the
others.
Reaching definitions and liveness analysis are examples of PVPs, as this kind
of information can be computed for one program variable independently from the
others.
For these problems we can build intermediate program representations isomorphic
to SEGs, as we state in Theorem~\ref{theo:iso}.
However, many data-flow problems, in particular the PLV analyses that we
mentioned in Section~\ref{sub:examples}, do not fit into this category.
Nevertheless, we can handle them sparsely.
The SEGs can still support PLV problems, but, in this case,
a new SEG vertex would be created for every \splitpoint where new information
is produced, and we would have a dense analysis.

\begin{theorem}[Equivalence SSI/SEG]
\label{theo:iso}
Given a forward Sparse Evaluation Graph (\textit{SEG}) that represents
a variable $v$ in a program representation \textit{Prog} with CFG $G$, there exists a
live range splitting strategy that once applied on $v$ builds a program
representation that is isomorphic to \textit{SEG}.
\end{theorem}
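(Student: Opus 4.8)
The plan is to exhibit a live range splitting strategy $\SS_v$ whose induced set of live-range splits coincides exactly with the vertices and edges of the given Sparse Evaluation Graph. Recall that a forward $\textit{SEG}$ for variable $v$ is obtained from the CFG $G$ by selecting a set $N$ of \emph{sparse nodes} --- the CFG nodes where the data-flow information for $v$ may be generated or altered --- and then closing $N$ under the meet operation: a node is added whenever two distinct pieces of information can reach it along disjoint paths. This closure is precisely the iterated dominance frontier construction. So the first step is to recall, from the standard SEG literature that the excerpt cites (Choi, Johnson, Pingali), that the vertex set of $\textit{SEG}$ is $N \cup DF^+(N)$, with edges induced by the paths of $G$ collapsed between consecutive sparse vertices.

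Next I would set $\SS_v = N_\downarrow$, i.e. take as \splitpoints exactly the original SEG sparse nodes, all oriented forward (this is legitimate: these are CFG nodes, each either an interior node, a fork, or a join). I then run \textsf{split}$(v,\SS_v)$ and trace which new definitions get created. Since $\SS_v$ contains no backward component ($I_\uparrow=\emptyset$), lines~3--9 of \textsf{split} contribute nothing, so $S_\uparrow=\emptyset$; lines~10--16 add $\In(DF^+(i))$ (resp. $\In(DF^+(e))$ for outgoing edges of forks) for every $i\in S_\uparrow\cup\Defs(v)\cup I_\downarrow = \Defs(v)\cup N$. Because reaching-definition information is itself generated at $\Defs(v)$, a correctly-formed SEG for $v$ already has $\Defs(v)\subseteq N$, so this set is just $N$, and \textsf{split} inserts $\phi$-functions exactly on $DF^+(N)$ and copies on the non-join members of $N$. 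Thus after \textsf{split}, the program points carrying a (possibly trivial) definition of a version of $v$ are exactly $N\cup DF^+(N)$ --- the SEG vertices. The \textsf{rename} pass then assigns one version per such point and wires up def-use chains; I would argue that the def-use edge from version $v_i$ (defined at SEG vertex $a$) to a use of $v_i$ (at SEG vertex $b$) exists if and only if $a$ reaches $b$ in $G$ without passing through another vertex of $N\cup DF^+(N)$ --- which is exactly the SEG edge relation. Finally I must check that \textsf{clean} does not destroy the isomorphism: \textsf{clean} only removes $\phi$/copy instructions whose versions reach no actual use or are reached by no actual definition, and a genuine SEG already excludes exactly those redundant nodes (this is the "pruned" aspect), so the two prunings agree.

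The map $\Phi$ sending each version of $v$ (equivalently, each post-\textsf{split}-and-\textsf{clean} definition point) to the corresponding SEG vertex, and each def-use chain edge to the corresponding SEG edge, is then the desired isomorphism; the SSI property from Theorem~\ref{theo:ssify} guarantees it is well-defined on the representation side, and the SEG closure property guarantees surjectivity onto SEG vertices. I would close by noting that forward direction is used only to fix orientations; the backward case is symmetric via $pDF^+$.

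The main obstacle I anticipate is the bookkeeping in the middle step: matching the \emph{edge} relation of the SEG --- which is defined by collapsing maximal CFG paths between consecutive sparse vertices --- against the def-use chains produced by \textsf{rename}, especially at fork and join nodes where \textsf{split} works edge-by-edge ($\incoming$/$\outgoing$) rather than node-by-node, so that a single CFG node may correspond to several $\sigma$- or $\phi$-operands and hence several SEG incidences. Getting the correspondence to be a genuine bijection there, rather than merely a surjection, is the delicate part, and it is also where the precise (standard) definition of SEG edges matters; a secondary subtlety is ensuring that $DF^+$ over-approximation of join sets does not introduce vertices absent from the SEG --- which is exactly why the \textsf{clean} step must be invoked to recover the pruned form.
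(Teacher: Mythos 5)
Your overall route is the paper's route: take the SEG's sparse nodes as a forward splitting strategy, observe that \textsf{split} then places definitions exactly at those nodes plus their iterated dominance frontier, and match def-use chains to SEG edges via dominance. However, there is a genuine gap in how you handle liveness, and it surfaces exactly at the step you try to dispatch with the "pruned SEG" remark. The paper's proof does not apply \SSIfy\ to $\mathit{Prog}$ directly: it first augments the program with a pseudo-definition of $v$ at the entry node, a pseudo-use of $v$ at the exit node, and a pseudo-use at every actual definition of $v$ (Lemma~\ref{lem:part}), and only then splits with $P_\downarrow$ where $P$ also contains the entry node. This augmentation does two things you cannot do without. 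First, it makes $v$ live at every \progpoint, so the live ranges of the versions of $v$ \emph{partition} the whole CFG; that partition is what gets matched to the SEG's mapping function $M$, which assigns a SEG node to \emph{every} CFG edge --- your proposed isomorphism ignores $M$ entirely, but $M$ is part of the SEG tuple $\langle N_{SG}, E_{SG}, M\rangle$ and must be accounted for. Second, it neutralizes \textsf{clean}: with a pseudo-use at the exit, every inserted definition reaches an actual use, so nothing is deleted. Your alternative --- asserting that "a genuine SEG already excludes exactly those redundant nodes" --- is false for the SEG definition the paper uses (and Choi \emph{et al.}'s): $N_{SG}$ is determined solely by non-identity transfer functions and the iterated dominance frontier, with no downstream-liveness pruning. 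So without the augmentation, \textsf{clean} would remove $\phi$-functions and copies at SEG vertices whose information never reaches an actual use of $v$, and your map $\Phi$ would fail to be surjective onto $N_{SG}$.

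A secondary looseness: you justify $\Defs(v)\subseteq N$ by appeal to the specific semantics of reaching definitions, but the theorem is about an arbitrary forward PVP; the paper again sidesteps this by the pseudo-uses at definition points and by defining $P$ as the non-identity-transfer-function nodes plus the entry node, rather than by an assumption on $N$. The fix is not to argue harder about \textsf{clean} or about $\Defs(v)$, but to prepend the augmentation of Lemma~\ref{lem:part} to your construction; once $v$ is live everywhere, your steps for vertices and edges go through essentially as you wrote them, and the live-range partition gives you the missing correspondence with $M$.
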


~\\
\noindent
\textbf{Static Single Information Form and Similar Program Representations: }
Scott Ananian has introduced in the late nineties the {\em Static Single
Information} (SSI) form, a program representation that supports both forward and
backward analyses~\cite{Ananian99}.
This representation was later revisited by Jeremy Singer~\cite{Singer06}.
The $\sigma$-functions that we use in this paper is a notation borrowed from
Ananian's work, and the algorithms that we discuss in
Section~\ref{sec:building} improve on Singer's ideas.
Contrary to Singer's algorithm we do not iterate between the insertion of
phi and sigma functions.
Consequently, as we will show in Section~\ref{sec:exp}, we insert less
phi and sigma functions.
Nonetheless, as we show in Theorem~\ref{theo:ssify}, our method is enough to
ensure the SSI properties for any combination of unidirectional problems.
In addition to the SSI form, we can emulate several other different representations, by changing our parameterizations. Notice that for SSI we have $\{\mathit{Defs}_\downarrow \cup \mathit{LastUses}_\uparrow\}$.
For Bodik's e-SSA~\cite{Bodik00} we have $\mathit{Defs}_\downarrow \bigcup \mathit{\Out(\textit{Conds})}_\downarrow$.
Finally, for SSU~\cite{George03,Lo98,Plevyak96} we have $\mathit{Uses}_\uparrow$.

The SSI constrained system might have several inequations for the same
left-hand-side, due to the way we insert phi and sigma functions.
Definition~\ref{def:ssi}, as opposed to the original SSI
definition~\cite{Ananian99,Singer06}, does not ensure the SSA or the SSU
properties.
These guarantees are not necessary to every sparse analysis.
It is a common assumption in the compiler's literature that ``data-flow analysis
(\dots) can be made simpler when each variable has only one definition", as
stated in Chapter 19 of Appel's textbook~\cite{Appel02}.
A naive interpretation of the above statement could lead one to conclude that
data-flow analyses become simpler as soon as the program representation enforces
a single source of information per live-range: SSA for forward propagation, SSU
for backward, and the \emph{original} SSI for bi-directional analyses.
This premature conclusion is contradicted by the example of dead-code
elimination, a backward data-flow analysis that the SSA form simplifies.
Indeed, the SSA form fulfills our definition of the SSI property for
dead-code elimination.
Nevertheless, the corresponding constraint system may have several
inequations, with the same left-hand-side, i.e., one for each use of a given
variable $v$.
Even though we may have several sources of information, we can still solve
this backward analysis using the algorithm in Figure~\ref{fig:propback}.
To see this fact, we can replace $G_v^i$ in Figure~\ref{fig:propback} by
``\emph{i is a useful instruction or one of its definitions is marked as
useful}'' and one obtains the classical algorithm for dead-code elimination.

~\\
\noindent
\textbf{Sparse Abstract Interpretation Framework: }
Recently, Oh {\em et al.}~\cite{Oh12} have designed and tested a framework that
sparsifies flow analyses modelled via abstract interpretation.
They have used this framework to implement standard analyses on the
interval~\cite{Cousot77} and on the octogon lattices~\cite{Mine06}, and have
processed large code bodies.
We believe that our approach leads to a sparser implementation.
We base this assumption on the fact that Oh {\em et al.}'s approach relies on
standard def-use chains to propagate information, whereas in our case, the
merging nodes combine information before passing it ahead.
As an example, lets consider the code
\texttt{if () then a=$\bullet$; else a=$\bullet$; endif if () then $\bullet$=a;
else $\bullet$=a; endif} under a forward analysis that generates information at
definitions and requires it at uses.
We let the symbol $\bullet$ denote unimportant values.
In this scenario, Oh et al.'s framework creates four dependence links between
the two \splitpoints where information is produced and the two \splitpoints where it is
consumed.
Our method, on the other hand, converts the program to SSA form; hence,
creating two names for variable a.
We avoid the extra links because a $\phi$-function merges the data that comes
from these names before propagating it to the use sites.

\section{Experimental Results}
\label{sec:exp}


This section describes an empirical evaluation of the size and runtime
efficiency of our algorithms.
Our experiments were conducted on a dual core \texttt{Intel Pentium D} of 2.80GHz of clock, 1GB of memory, running \texttt{Linux Gentoo}, version 2.6.27.
Our framework runs in LLVM 2.5~\cite{Lattner04}, and it passes all the tests that LLVM does.
The LLVM test suite consists of over 1.3 million lines of C code.
In this paper we show results for SPEC CPU 2000.
To compare different live range splitting strategies we generate the program
representations below.
Figure~\ref{fig:splittingSt} explains the sets {\em Defs, Uses} and {\em Conds}.
\begin{enumerate}

\item \textit{SSI}:
Ananian's Static Single Information form~\cite{Ananian99} is our baseline.
We build the SSI program representation via Singer's iterative algorithm.

\item \textit{ABCD}:
$(\{\mathit{Defs}, \mathit{Conds}\}_\downarrow)$.
This live range splitting strategy generalizes the ABCD algorithm for array bounds checking elimination~\cite{Bodik00}.
An example of this live range splitting strategy is given in Figure~\ref{fig:taintAnalysis}.

\item \textit{CCP}:
$(\{\mathit{Defs}, \mathit{Conds}_{eq}\}_\downarrow)$.
This splitting strategy, which supports Wegman {\em et al.}'s~\cite{Wegman91} conditional constant propagation, is a subset of the previous strategy.
Differently of the ABCD client, this client requires that only variables used in equality tests, e.g., \texttt{==}, undergo live range splitting.
That is, $\mathit{Conds}_{eq}(v)$ denotes the conditional tests that check if $v$ equals a given value.

\end{enumerate}

\begin{figure}[t!]
\begin{center}
\includegraphics[width=0.8\linewidth]{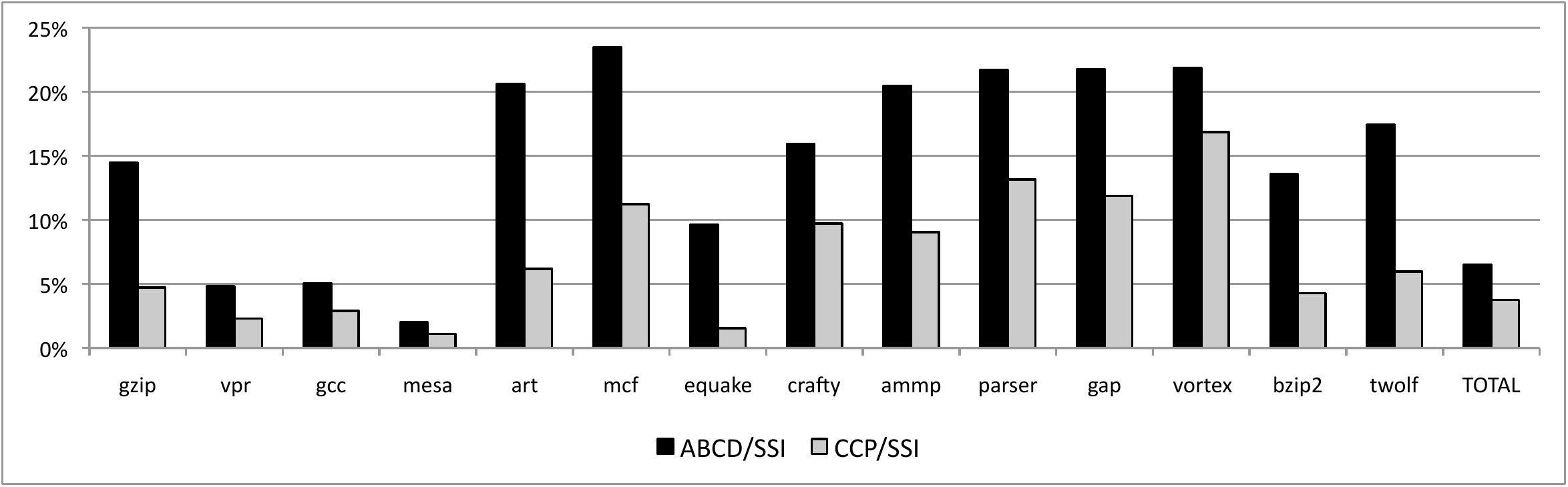} \caption{\small Comparison of the time taken to produce the different representations.
100\% is the time to use the SSI live range splitting strategy.
The shorter the bar, the faster the live range splitting strategy.
The SSI conversion took 1315.2s in total, the ABCD conversion took 85.2s, and the CCP conversion took 49.4s.
} \label{fig:partialXfull} \end{center} \end{figure}

~\\
\noindent
\textbf{Runtime: }
The chart in Figure~\ref{fig:partialXfull} compares the execution time of the three live range splitting strategies.
We show only the time to perform live range splitting.
The time to execute the optimization itself, removing array bound checks or performing constant propagation, is not shown.
The bars are normalized to the running time of the SSI live range splitting strategy.
On the average, the ABCD client runs in 6.8\% and the CCP client runs in 4.1\% of the time of SSI.
These two forward analyses tend to run faster in benchmarks with sparse control flow graphs, which present fewer conditional branches, and therefore fewer opportunities to restrict the ranges of variables.

In order to put the time reported in Figure~\ref{fig:partialXfull} in perspective, Figure~\ref{fig:partialXLLVM} compares the running time of our live range splitting algorithms with the time to run the other standard optimizations in our baseline compiler\footnote{To check the list of LLVM's target independent optimizations try \texttt{llvm-as < /dev/null | opt -std-compile-opts -disable-output -debug-pass=Arguments}}.
In our setting, LLVM -O1 runs 67 passes, among analysis and optimizations, which include partial redundancy elimination, constant propagation, dead code elimination, global value numbering and invariant code motion.
We believe that this list of passes is a meaningful representative of the optimizations that are likely to be found in an industrial strength compiler.
The bars are normalized to the optimizer's time, which consists of the time taken by machine independent optimizations plus the time taken by one of the live range splitting clients, e.g, ABCD or CCP.
The ABCD client takes 1.48\% of the optimizer's time, and the CCP client takes 0.9\%.
To emphasize the speed of these passes, we notice that the bars do not include the time to do machine dependent optimizations such as register allocation.

\begin{figure}[t!]
\begin{center}
\includegraphics[width=0.8\linewidth]{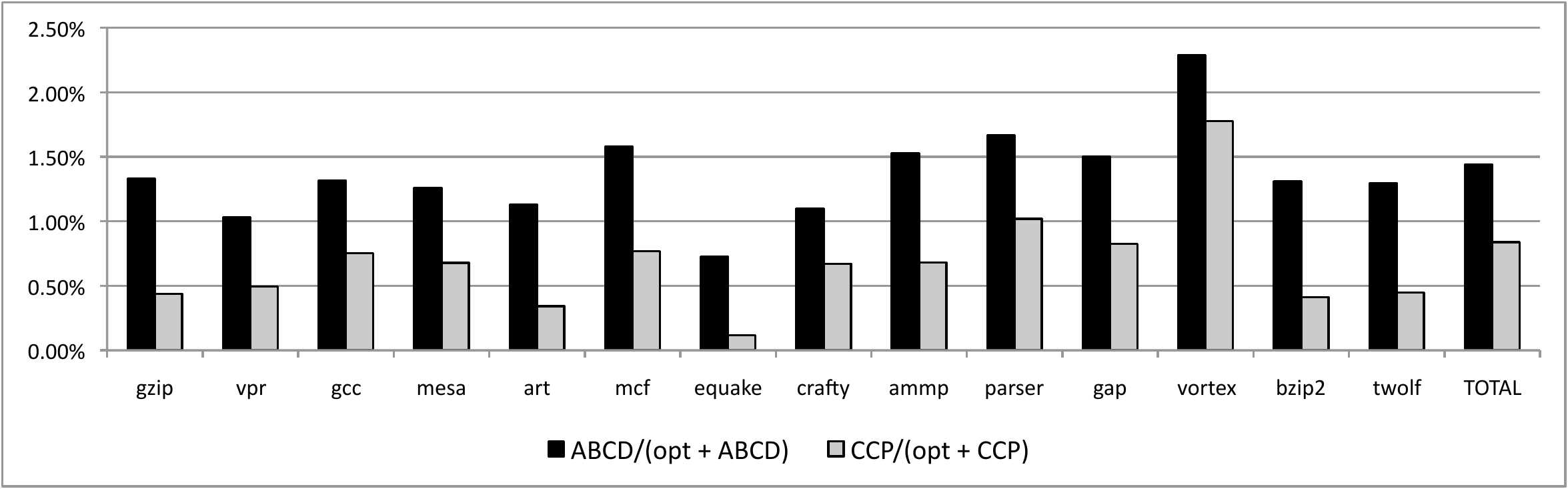} \caption{\small Execution time of two different live range splitting strategies compared to
the total time taken by machine independent LLVM optimizations
(\texttt{opt}  -O1).
100\% is the time taken by \texttt{opt}.
The shorter the bar, the faster the conversion.
} \label{fig:partialXLLVM} \end{center} \end{figure}

~\\
\noindent
\textbf{Space: }
Figure~\ref{fig:ProgGrowth} outlines how much each live range splitting strategy increases program size. We show results only to the ABCD and CCP clients, to keep the chart easy to read.
The SSI conversion increases program size in 17.6\% on average.
This is an absolute value, i.e., we sum up every $\phi$ and $\sigma$ function inserted, and divide it by the number of bytecode instructions in the original program.
This compiler already uses the SSA-form by default, and we do not count as new instructions the $\phi$-functions originally used in the program.
The ABCD client increases program size by 2.75\%, and the CCP client increases program size by 1.84\%.

\begin{figure}[t!]
\begin{center}
\includegraphics[width=0.8\linewidth]{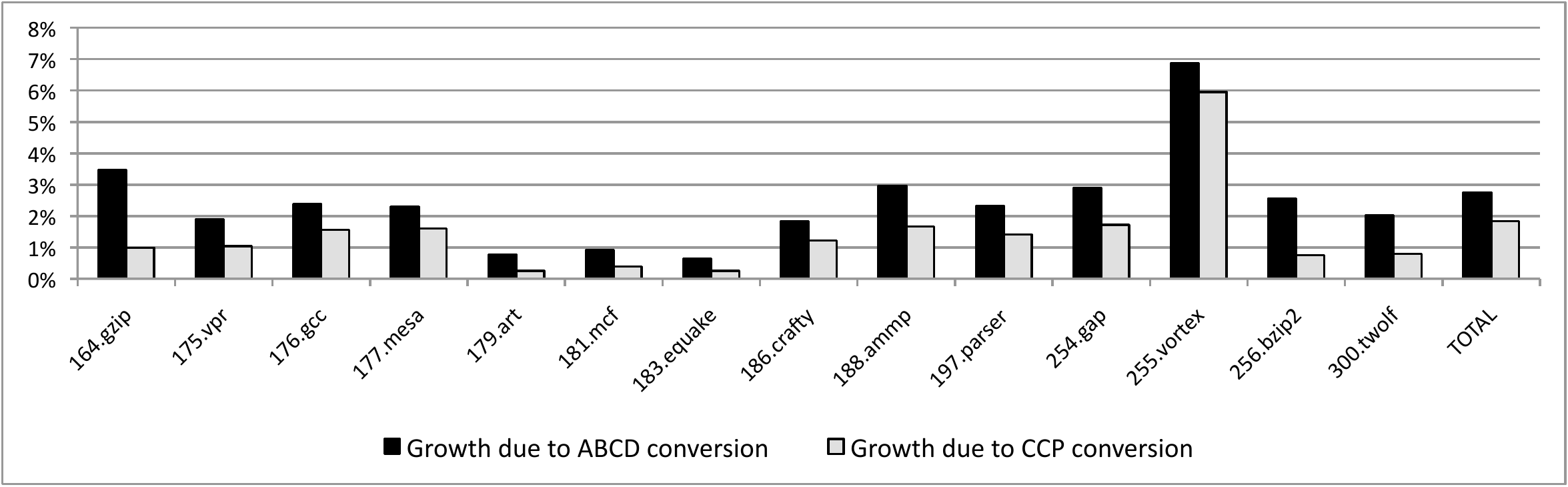} \caption{\small Growth in program size due to the insertion of new $\phi$ and $\sigma$ functions to perform live range splitting.} \label{fig:ProgGrowth} \end{center} \end{figure}

An interesting question that deserves attention is ``What is the benefit of using a sparse data-flow analysis in practice?" We have not implemented dense versions of the ABCD or the CCP clients.
However, previous works have shown that sparse analyses tend to outperform equivalent dense versions in terms of time and space efficiency~\cite{Choi91,Ramalingan02}.
In particular, the e-SSA format used by the ABCD and the CCP optimizations is the same program representation adopted by the tainted flow framework of Rimsa {\em et al.}~\cite{Rimsa11,Rimsa14}, which has been shown to be faster than a dense implementation of the analysis, even taking the time to perform live range splitting into consideration.

\section{Conclusion}
\label{sec:con}
This paper has presented a systematic way to build program representations that suit sparse data-flow analyses.
We build different program representations by splitting the live ranges of variables.
The way in which we split live ranges depends on two factors:
(i) which \splitpoints produce new information, e.g., uses, definitions,
tests, etc; and (ii), how this information propagates along the variable live
range: forwardly or backwardly.
We have used an implementation of our framework in LLVM to convert programs to the Static Single Information form~\cite{Ananian99}, and to provide intermediate representations to the ABCD array bounds-check elimination algorithm~\cite{Bodik00} and to Wegman {\em et al.}'s Conditional Constant Propagation algorithm~\cite{Wegman91}.
Our framework has been used by Couto {\em et al.}~\cite{Couto11} and by
Rodrigues {\em et al.}~\cite{Rodrigues13} in different implementations of
range analyses.
We have also used our live range splitting algorithm, implemented in the \texttt{phc} PHP compiler~\cite{Biggar09b,Biggar09}, to provide the Extended
Static Single Assignment form necessary to solve the tainted flow
problem~\cite{Rimsa11,Rimsa14}.

~\\
\noindent
\textbf{Extending our Approach. }
For the sake of simplicity, in this paper we have restricted our
discussion to: non relational analysis (PLV), intermediate-representation based appoach, and scalar variables without aliasing.
 
(1) \emph{non relation analysis}. In this paper we have focused on PLV
problems, i.e. solved by analyses that associate some information with each variable individually. For instance, we bind $i$ to a range
$0\leq i < \texttt{MAX\_N}$, but we do not relate $i$ and $j$, as in $0\leq i< j$. A relational analysis that provides a all-to-all relation between all variables of the program is dense by nature, as any \splitpoint both produces and consumes information for the analysis. Nevertheless, our framework is compatible with the notion of {\em packing}. Each pack is a set of variable groups selected to be related together. This approach is usually adopted in practical relational analyses, such as those used in Astr\'{e}e~\cite{Cousot09,Mine06}.

(2) \emph{IR based approach}. Our framework constructs an intermediate representation (IR) that preserves the semantic of the program. Like the SSA form, this IR has to be updated, and prior to final code generation, destructed. Our own experience as compiler developers let us believe that manipulating an IR such as SSA has many engineering advantages over building, and afterward dropping, a separate sparse evaluation graph (SEG) for each analysis. Testimony of this observation is the fact that the SSA form is used in virtually every modern compiler. Although this opinion is admittedly arguable, we would like to point out that updating and destructing our SSI form is equivalent to the update and destruction of SSA form. More importantly, there is no fundamental limitation in using our technique to build a separate SEG without modifying the IR. This SEG will inherit the sparse properties as his corresponding SSI flavor, with the benefit of avoiding the quadratic complexity of direct def-use chains ($|\textrm{Defs}(v)|\times|\textrm{Uses}(v)|$ for a variable $v$) thanks to the use of $\phi$ and $\sigma$ nodes. Note that this quadratic complexity becomes critical when dealing with code with aliasing or predication~\cite[pp.234]{Oh12}.

(3) \emph{analysis of scalar variables without aliasing or predication}. The most successful flavor of SSA form is the minimal and pruned representation restricted to scalar variables. The SSI form that we describe in this paper is akin to this flavor. Nevertheless, there exists several extensions to deal with code with predication (e.g. $\psi$-SSA form~\cite{Ferriere07}) and aliasing (e.g. Hashed~SSA~\cite{chow:hssa} or Array~SSA~\cite{FiKS00}). Such extensions can be applied without limitations to our SSI form allowing a wider range of analyses involving object aliasing and predication.

\begin{small}
\bibliographystyle{plain}
\bibliography{references}
\end{small}

\newpage
\appendix

\section{Isomorphism to Sparse Evaluation Graphs}
\label{app:iso}

Given a control flow graph $G$, Choi {\em et al.} define a sparse evaluation graph as a tuple $\langle N_{SG}, E_{SG}, M \rangle$, such that:
\begin{itemize}
\item $N_{SG}$ is a set of nodes defined as follows:
\begin{enumerate}
\item $N_{SG}$ contains a node $n_s$ representing the entry \splitpoint $s \in G$; \item $N_{SG}$ contains a node $n_p$ for each \splitpoint $p \in G$ that is associated with a non-identity transfer function.
\item $N_{SG}$ contains a node $n_m$ for each point $m$ in the iterated dominance frontier of the \splitpoints of $G$ used to build the nodes in step (1) and (2).
These are called {\em meet} nodes.
\end{enumerate}
\item We let $P$ denote the set of \splitpoints $p \in G$ used in step 2 above, plus the \splitpoint $s \in G$ used in step 1 above;
we let $M$ denote the set of \splitpoints $m \in G$ used in step 3 above;
if we let $S=P\bigcup M$ then we define $E_{SG}$ as follows:
\begin{enumerate}
\item there is an edge $(n_q,n_m)\in N_{SG}^2$ whenever $m\in M$ and $q$ is, among all the nodes in $S$, the immediate dominator of one of the CFG predecessors of $m$.
See \texttt{search(3b)} and \texttt{link(2b)} in Choi {\em et al}~\cite{Choi91}; 
\item there is an edge $(n_q,n_p)\in N_{SG}^2$ whenever $p \in P$, and $q$ is, among all the nodes in $S$, the immediate dominator of $p$.
See \texttt{search(1)} and \texttt{link(2b)}~\cite{Choi91}; 
\end{enumerate} 
\item The mapping function $M:
E_{G} \mapsto N_{SG}$ associates to each edge $(u,v)$ of the CFG the node $n_q \in N_{SG}$, whenever $q \in S$ is the immediate dominator of $u \in G$.
See \texttt{search(3a)}~\cite{Choi91}.
This is done through the recursive function \texttt{search} that performs a topological traversal of the CFG (DFS of the dominance tree; See \texttt{search(4)}~\cite{Choi91}).
\end{itemize}
Theorem~\ref{theo:iso} states that, for forward partitioned variable data-flow problems (PVP), the algorithm in Figure~\ref{fig:SSIfy} can build program representations isomorphic to Sparse Evaluation Graphs.
The proof that this result holds for backward data-flow problems, is analogous, and we omit it.

\begin{lemma}[CFG cover]
\label{lem:part} Let $Prog$ be a program with its corresponding CFG $G$ with start node $s$, and exit node $x$.
Let $Prog'$ be the program that we obtain from $Prog$ by:
\begin{enumerate}
\item adding a pseudo-definition of each variable to $s$; 
\item adding a pseudo-use of each variable to $x$; 
\item placing a pseudo-use of a variable $v$ at each \splitpoint where $v$ is defined; 
\item converting the resulting program into SSA form.
\end{enumerate}
If $v$ is a variable in $Prog$, then the live ranges of the different names of $v$ in $Prog'$ completely partition the \progpoints of $G$.
In other words, each \progpoint of $G$ belongs to exactly one live range of $v$ in $Prog'$.
\end{lemma}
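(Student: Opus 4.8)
The plan is to show the two halves of ``exactly one'': (a) every \progpoint of $G$ lies in \emph{at least} one live range of some name of $v$ in $Prog'$, and (b) it lies in \emph{at most} one. For part (a), I would use the pseudo-definition of $v$ inserted at the start node $s$ together with the pseudo-uses added at $x$ and at every definition site of $v$. After the SSA conversion of step~4, every \progpoint $p$ reaches a use of $v$ along \emph{some} path that is free of redefinitions of the SSA name live at $p$ — indeed $p$ reaches the pseudo-use at the exit $x$ (since $x$ is reachable from every node), and whichever SSA name of $v$ is the one holding at $p$ is, by construction of SSA renaming, live at $p$ because it is eventually used without being killed first. The pseudo-definition at $s$ guarantees the reaching name is never ``\undef''. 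So $p$ is covered.

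For part (b), suppose $p$ were in the live ranges of two distinct names $v_i$ and $v_j$ of $v$ in $Prog'$. In an SSA-form program every use of a name is reached by a unique definition (the defining instruction of that name, which dominates all its uses). The key point is that the added pseudo-uses make the live range of each SSA name a ``closed'' region: after step~3 every definition site of $v$ carries a pseudo-use, so the live range of a name $v_i$ ends exactly where the next definition of $v$ (hence a new name) begins. Thus at $p$ the ``currently live'' name is determined by the unique most-recent definition of $v$ along any path reaching $p$ — and SSA renaming guarantees this is well defined (that is precisely why $\phi$-functions are inserted at join points). If two names were live at $p$, then from $p$ one could reach uses of both $v_i$ and $v_j$ along redefinition-free paths, forcing $p$ to be dominated by both $\Def(v_i)$ and $\Def(v_j)$ with no intervening definition on some path; but then the SSA property is violated at the first join of those two paths, where a $\phi$ must have renamed both into a single new name. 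Hence at most one name is live at $p$.

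Combining (a) and (b), each \progpoint of $G$ belongs to exactly one live range of $v$ in $Prog'$, i.e., the live ranges of the names of $v$ partition the \progpoints of $G$.

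I expect the main obstacle to be making part~(b) fully rigorous: one must argue carefully that the pseudo-uses at definition sites (step~3) genuinely terminate the previous name's live range \emph{before} the new definition, so that two names cannot overlap at a single \progpoint, and that the SSA $\phi$-placement handles the join-point case. The subtlety is purely about the SSA invariant ``unique reaching definition'' interacting with the extra pseudo-uses; once that is stated precisely, the disjointness follows. Part~(a) is comparatively routine given the pseudo-use at the exit node $x$ and reachability of $x$ from every node.
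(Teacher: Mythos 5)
Your decomposition into coverage (every \progpoint lies in at least one live range, thanks to the pseudo-definition at $s$ and the pseudo-uses at $x$ and at each definition site) plus disjointness (no two SSA names of $v$ are simultaneously live) is exactly the paper's proof. The only difference is that the paper dispatches disjointness by citing it as a known property of Cytron's SSA construction (as observed by Sreedhar \emph{et al.}), whereas you sketch the standard dominance-plus-$\phi$-placement argument for that property yourself; your sketch is the right one, so there is no gap.
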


\begin{proof}
First, $v$ is alive at every \progpoint of $G$, due to transformations (1), (2) and (3).
Therefore, if $V$ is the set of the different names of $v$ after the conversion to SSA form in step (4), then any \progpoint of $G$ belongs to the live range of at least one $v' \in V$.
The result follows from a well-know property of Cytron's SSA-form conversion algorithm~\cite{Cytron91}, which, as observed by Sreedhar {\em et al.}~\cite{Sreedhar99}, creates variables with non-intersecting live ranges.
In other words, after the SSA renaming, two different names of $v$ cannot be simultaneously alive at a \progpoint $p$.
\end{proof}

~\\
\noindent
\textbf{[Equivalence SSI/SEG - See Theorem 3]}
Given a forward Sparse Evaluation Graph ($SEG$) that represents a variable $v$ in a program representation $Prog$ with CFG $G$, there exits a live range splitting strategy that once applied on $v$ builds a program representation that is isomorphic to $SEG$.

\begin{proof}
We argue that the SEG of $v$ is isomorphic to the representation of $v$ in $Prog'$, the program representation that we derive from $Prog$ by applying the transformations 1-3 listed in Lemma~\ref{lem:part} in addition to a pass of \SSIfy.
If we let $P$, as before, be defined as the set of CFG nodes associated with non-identity transfer functions, plus the start node $s$ of the CFG, then after we apply the splitting strategy $P_\downarrow$, we have that:
\begin{enumerate}
\item there will be exactly one definition per node of $P$ and one  definition per node of $DF^{+}(P)$.
So there is an one-to-one correspondence between SSA definitions and SG nodes.
\item From Lemma~\ref{lem:part} the live-ranges of the different names of $v$ provides a partitioning of the \progpoints of $G$.
If $v'$ is a new name of $v$, then each \progpoint where $v'$ is alive is dominated by $v'$'s definition\footnote{This is a classical result of SSA-form.
See Budimlic {\em et al.}~\cite{Budimlic02} for a proof}.
Each \progpoint belongs to the live-range of the name of $v$ whose definition immediately dominates it (among all definitions).
Thus, live ranges give origin to a function that maps SSA definitions to \progpoints.
Consequently, there is an isomorphism between the live-ranges and the mapping function $M$.
\item def-use chains on $Prog'$ are isomorphic to the edges in $E_{SG}$:
indeed a SEG node $n_p$ is linked to $n_q$ whenever (i) $n_p$ immediately dominates $n_q$ if $q\in P$; or (ii) $n_q$ is in the dominance frontier of $n_p$ if $q\in M$.
In the former case the definition of $v$ at $p$ reaches the (pseudo-)use of $v$ at $q$.
In the latter this definition reaches the use of $v$ at the $\phi$-function placed at $q$ by $\SSIfy(v, P_\downarrow)$.
\end{enumerate}
\end{proof}

In the proof of Theorem~\ref{theo:iso} we had to augment the program with a pseudo-definition of $v$ at the CFG's entry node and a pseudo-use at every actual definition of $v$ and at the CFG's exit node.
The difference between a code with or without pseudo uses/defs is related to the necessity to compute data-flow information beyond the live-ranges of variables or not.
This necessity exists for optimizations such as partial redundancy elimination, which may move, create or delete code.

Figure~\ref{fig:ssiChoiReaching} compares SEG and the forward live range splitting strategy in the example taken from Figure 11 of Choi {\em et al.}~\cite{Choi91}, which shows the reaching uses analysis.
In the left we see the original program, and in the middle the SEG built for a forward flow analysis that extracts information from uses of variables.
We have augmented the edges in the left CFG with the mapping $M$ of SEG nodes to CFG edges.
In the right we see the same CFG, augmented with pseudo defs and uses, after been transformed by $\SSIfy$ applied on the \splitpoints $\{S, 4, 5, 7, 11, 12\}_\downarrow$.
The edges of this CFG are labeled with the definitions of $v$ live there.

\begin{figure}[t!]
\begin{center}
\includegraphics[width=\textwidth]{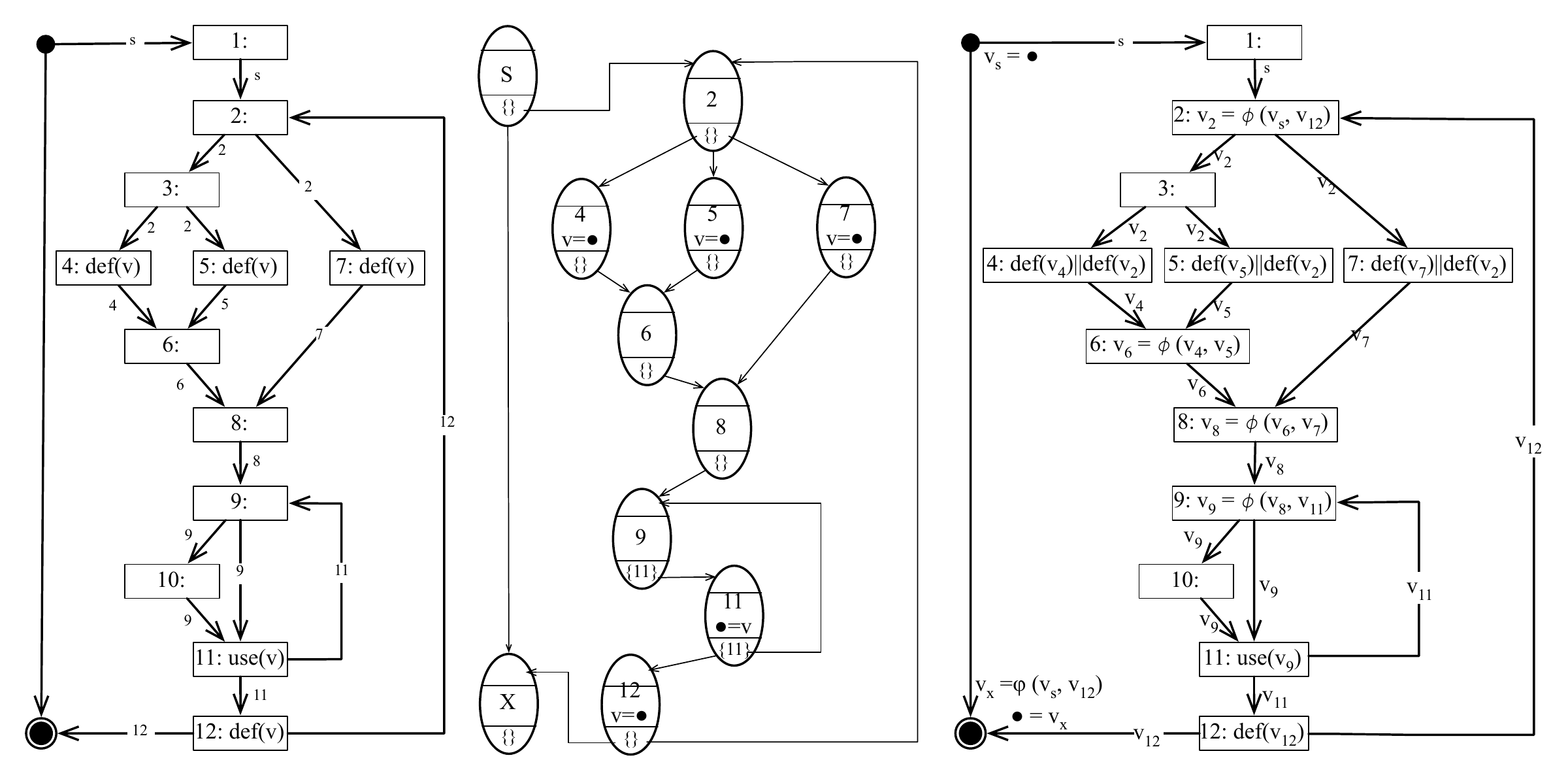} \caption{Example of equivalence between SEGs and our live range splitting strategy for reaching uses.} \label{fig:ssiChoiReaching} \end{center} \end{figure}

\section{Correctness of our SSIfication}
\label{sec:ssify}

In this section we consider a unidirectional forward (resp. backward) PLV problem stated as a set of equations $[v]^p = [v]^p \wedge F_v^{s,p}(\dots)$ for every variable $v$, each \progpoint $p$, and each $s \in \mathit{preds}(p)$ (resp. $s \in \mathit{succs}(p)$). 
We rely on the nomenclature introduced by Definition~\ref{def:tf} in order
to prove Theorem~\ref{theo:ssify}.

\begin{lemma}[Live range preservation]
\label{lem:live_pres}
If variable $v$ is live at a \progpoint $p$, then there is a version of $v$ live at $p$ after we run {\sf SSIfy}.
\end{lemma}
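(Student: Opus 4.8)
The plan is to exhibit a single version of $v$ that ``survives'' all three phases of {\sf SSIfy} --- {\sf split}, {\sf rename}, {\sf clean} --- and to check that it is still live at $p$ in the end. Fix $p$ with $p\in\textrm{live}(v)$ in the original program. By definition of liveness there is an \emph{actual} instruction $u$ that reads $v$ together with a CFG path $\pi$ from $p$ to $u$ along which no actual instruction redefines $v$. Unless $v$ is read uninitialized at $u$ --- a degenerate situation in which $p$ ends up covered only by \undef{} and which I will set aside --- there is moreover an actual definition $d$ of $v$ whose value reaches $p$.

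First I would argue that right after {\sf split} and {\sf rename}, before {\sf clean} runs, some version $v_i$ of $v$ is live at $p$. The {\sf split} pass inserts only $\phi$-, $\sigma$- and copy-instructions, each of which simultaneously reads and writes $v$, and it preserves the SSA property, so after {\sf rename} every point at which a version of $v$ is live is dominated by the unique definition of that version, and distinct versions have disjoint live ranges. Walking $\pi$ backwards from $u$: the version read at $u$ is some $v_j$; at each inserted instruction met on $\pi$ the live version changes (through a copy, a $\phi$-argument, or a $\sigma$-output) to the one live on the preceding stretch; and since no actual instruction strictly between $p$ and $u$ on $\pi$ redefines $v$, this sequence of versions cannot terminate before reaching $p$. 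Hence a version $v_i$ of $v$ is live at $p$. Running the same argument on a path from $d$ to $p$, together with the value correspondence of Theorem~\ref{theo:semantics}, also gives a chain of copies/$\phi$/$\sigma$ carrying the value written by the transformed counterpart of $d$ forward to $v_i$.

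Next I would show {\sf clean} does not destroy $v_i$, where I call \emph{value-flow edge} any pair $v_a\to v_b$ such that some instruction reads $v_a$ and writes $v_b$. Inspecting Figure~\ref{fig:clean}, the set \var{used} is precisely the set of versions from which a value-flow path reaches a version read by an actual instruction, and \var{defined} is precisely the set of versions reachable by a value-flow path from a version written by an actual instruction; hence $\var{live}=\var{defined}\cap\var{used}$ is the set of versions lying on a value-flow path from an actually-written version to an actually-read version. The two chains produced above --- from $d$'s transformed definition forward to $v_i$, and from $v_i$ forward to $u$ --- concatenate into exactly such a path through $v_i$, so $v_i\in\var{live}$. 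Therefore no occurrence of $v_i$ is overwritten by \undef{} in lines 16--18, and the instruction $\Def(v_i)$ is not deleted in lines 19--20: it still writes $v_i\in\var{live}$ and still reads a live operand, namely the version just before $v_i$ on its incoming chain (that version is in \var{defined} since the chain descends from $d$, and in \var{used} since it feeds $v_i$, which feeds $u$). By the same token the first instruction on $\pi$ that reads $v_i$ --- either $u$ itself or the inserted instruction defining the next version --- is retained. Since $\Def(v_i)$ still dominates $p$ ({\sf clean} leaves the SSA structure untouched) and there is a retained read of $v_i$ reachable from $p$ along $\pi$ with no intervening redefinition of $v_i$, the version $v_i$ is live at $p$ in the final program, as required.

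The step I expect to be the real obstacle is ruling out that {\sf clean} silently \emph{shrinks} $v_i$'s live range --- e.g.\ by deleting a copy or $\phi$ through which $v_i$'s value flows, on account of some \emph{other} operand of that instruction having become dead --- and thereby pushes $p$ outside it. Turning ``$v_i\in\var{live}$'' into ``every instruction needed to keep $v_i$ live at $p$ is retained'' requires a careful closure argument that must behave correctly for $\phi$-instructions (several incoming operands) and $\sigma$-instructions (several results); that is also the place where the uninitialized-use case has to be excluded, since there $v_i$ can legitimately be cleaned away.
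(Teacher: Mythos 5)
Your proof is correct, and at the top level it follows the same phase-by-phase decomposition as the paper's; the difference is that you do real work where the paper is terse. The paper's entire proof is three sentences: {\sf split} only inserts copies from $v$ to $v$ and so cannot remove any live range; {\sf rename} replaces live ranges of $v$ by live ranges of its versions; and {\sf clean} ``only removes copies, hence all the original instructions remain in the code.'' That last sentence is exactly the gap you flag at the end of your write-up: keeping the actual use $u$ alive is not by itself enough, since deleting an intermediate copy, $\phi$, or $\sigma$ could in principle disconnect $p$ from every retained read of the version covering it. Your closure argument --- characterizing \var{defined} and \var{used} as forward and backward reachability along value-flow edges, exhibiting an actual-def-to-actual-use chain passing through $v_i$, and checking that no instruction on that chain can satisfy the deletion test of lines 19--20 because each one defines and uses at least one version in $\var{live}$ --- is precisely what is needed to discharge this step, and it goes through as you describe. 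Your decision to set aside uses of $v$ not reached by any definition is also legitimate rather than evasive: in that case {\sf rename} replaces the use by $\undef$ and the lemma as stated can fail at $p$, so the paper's proof is implicitly assuming that every use of $v$ is reached by a definition on every path. In short, you buy rigor on the {\sf clean} phase at the cost of length; the paper buys brevity by leaving that step, and the strictness assumption, implicit.
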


\begin{proof}
{\sf Split} cannot remove any live range of $v$, as it only inserts ``copies" from $v$ to $v$, e.g., each copy has the same source and destination.
{\sf Rename} removes live ranges of $v$, but it replaces them with the live ranges of new versions of this variable whenever a use of $v$ is renamed.
{\sf Clean} only removes ``copies"; hence, all the original instructions remain in the code.
\end{proof}

\begin{lemma}[Non-Overlapping]
\label{lem:live_over}
Two different versions of $v$, e.g., $v_k$ and  $v_j$ cannot both be live at a \progpoint $p$ transformed by {\sf SSIfy}.
\end{lemma}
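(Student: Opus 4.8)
The plan is to prove Lemma~\ref{lem:live_over} by reducing it to the standard SSA non-overlapping property via Lemma~\ref{lem:part}. First I would observe that the renaming phase of {\sf SSIfy} is essentially the classical SSA renaming algorithm (as the text notes, it is modeled on Appel's Algorithm 19.7): it walks the dominance tree, and each use of $v$ is replaced by the version whose definition is currently on top of the stack, i.e., the version whose definition most immediately dominates the use. A well-known consequence, proved by Sreedhar {\em et al.} and already invoked in the proof of Lemma~\ref{lem:part}, is that after such a renaming the live ranges of the different versions are pairwise disjoint: no two versions of the same original variable can be simultaneously live at a \progpoint.

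The key steps, in order, would be: (1) recall that after {\sf split} and {\sf rename}, every version $v_i$ of $v$ has a single definition $\Def(v_i)$, and that $v_i$ is live only at \progpoints dominated by $\Def(v_i)$ --- this is the classical SSA dominance property (cf.\ the footnote citing Budimlic {\em et al.} in the proof of Theorem~\ref{theo:iso}). (2) Argue that if $v_k$ and $v_j$ were both live at some \progpoint $p$, then both $\Def(v_k)$ and $\Def(v_j)$ dominate $p$; since the dominators of $p$ are totally ordered, one definition --- say $\Def(v_j)$ --- strictly dominates the other. (3) Derive a contradiction: along any path from $\Def(v_j)$ to $p$ witnessing liveness of $v_j$, control must pass through $\Def(v_k)$, which would redefine the original variable, so the renaming stack would have replaced uses beyond $\Def(v_k)$ with $v_k$ (or a later version), not $v_j$; hence $v_j$ cannot in fact be live past $\Def(v_k)$ and so cannot be live at $p$. (4) Finally note that {\sf clean} only removes instructions and replaces some operands by $\undef$; it never creates new live ranges, so the non-overlapping property is preserved through the last phase. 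The conclusion is that two distinct versions of $v$ cannot both be live at a \progpoint transformed by {\sf SSIfy}.

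The main obstacle I anticipate is handling the $\phi$- and $\sigma$-functions introduced by {\sf split} carefully, since these are exactly the mechanism that keeps live ranges from overlapping at join and fork points, and the notion of ``live at a \progpoint'' near such a node is delicate (a $\phi$-function's operands are conventionally treated as uses at the ends of the predecessor edges, not at the join itself, per the conventions fixed in Section~\ref{sub:split}). I would dispatch this by relying on the setup already established: {\sf split} inserts exactly the $\phi$/$\sigma$ instructions needed so that the resulting program is in SSA form (the text states ``Algorithm {\sf split} preserves the SSA property''), and then invoke the non-overlapping property of SSA form directly, as was done for Lemma~\ref{lem:part}. In effect, Lemma~\ref{lem:live_over} is the ``per-\progpoint'' restatement of the partitioning asserted in Lemma~\ref{lem:part}, now for the full {\sf SSIfy} pipeline rather than the auxiliary program $Prog'$, so the cleanest route is to quote the SSA renaming invariant and then account for {\sf clean} as in step (4) above.
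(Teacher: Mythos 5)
Your proof follows essentially the same route as the paper's: both versions' definitions must dominate $p$, dominance totally orders those definitions, and the {\sf rename} stack discipline then forces the use witnessing liveness of the shadowed version to have been renamed to the dominating version instead --- a contradiction. The paper argues this directly from the line numbers of {\sf rename} rather than by citing the classical SSA/Sreedhar--Budimlic property, and it omits your (harmless) closing remark about {\sf clean}, but the substance is identical.
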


\begin{proof}
The only algorithm that creates new versions of $v$ is {\sf rename}.
Each new version of $v$ is unique, as we ensure in lines 28-30 of the algorithm.
If {\sf rename} changes the use of $v$ to $v_k$ at a \splitpoint $i$, then there exists a definition of $v_k$ at some \splitpoint $i'$ that dominates $i$, as we ensure in line 22 of the algorithm.
Let us assume that we have two versions of $v$, e.g., $v_k$ and $v_j$, live at a \progpoint $p$, in order to derive a contradiction.
In this case, there exist \splitpoints $i_k$ where $v_k$ is used, and $i_j$ where $v_j$ is used, reachable from $p$.
Also there exists a \splitpoint $i_k'$ where $v_k$ is defined, and a \splitpoint $i_j'$ where $v_j$ is defined. $i_k'$ dominates $p$, and $i_j'$ dominates $p$.
Thus, either $i_k'$ dominates $i_j'$ or vice-versa.
Without loss of generality, let us assume that $i_k'$ dominates $i_j'$.
In this case, {\sf rename} visits $i_k'$ first, and upon visiting $i_j'$, places the definition of $v_j$ on top of the definition of $v_k$ in the stack in line 31.
Thus, $i_k'$ cannot dominate $i'_j$, or we would have, at $i_k$, a use of $v_j$, instead of $v_k$.
\end{proof}

~\\
\noindent
\textbf{[Semantics - Theorem 1]}
{\sf SSIfy} maintains the following property:
if a value $n$ written to variable $v$ at \splitpoint $i'$ is read at a \splitpoint $i$ in the original program, then the same value assigned to a version of variable $v$ at \splitpoint $i'$ is read at a \splitpoint $i$ after transformation.

\begin{proof}
For simplicity, we will extend the meaning of ``copy'' to include not only
the parallel copies placed at interior nodes, but also $\phi$ and $\sigma$-functions.
{\sf Split} cannot create new values, as it only inserts ``copies".
{\sf Clean} cannot remove values, as it only removes ``copies".
From the hypothesis we know that the definition of $v$ that reaches $i$ is
live at $i$.
From Lemma~\ref{lem:live_pres} we know that there is a version of v live at $i$.
From Lemma~\ref{lem:live_over} we know that only one version of $v$ can be live at $i$, and so {\sf rename} cannot send new values to $i$.
\end{proof}

Now suppose that the program, not necessarily under SSI form, fulfills {INFO} and {LINK} from Definition~\ref{def:ssi} for a system of monotone equations $E_{\var{dense}}$, given as a set of constraints $[v]^p \inclu F_v^{s,p}([v_1]^s, \dots, [v_n]^s)$. 
Consider a live range splitting strategy $\SS_v$ that \emph{includes} for each variable $v$ the set of \splitpoints $I_\downarrow$ (resp. $I_\uparrow$) where $F_v^{s,p}$ is non-trivial.
The following theorem states that Algorithm~\textsf{\SSIfy} creates a program form that fulfills the Static Single Information property.

~\\
\noindent
\textbf{[Correctness of SSIfy - Theorem 2]}
Given the conditions stated above, Algorithm~\textsf{\SSIfy}($v,\,\SS_v$) creates a new program representation such that:
\begin{enumerate}
\item there exists a system of equations $E^{\var{ssi}}_{\var{dense}}$, isomorphic to $E_{\var{dense}}$ for which the new program representation fulfills the SSI property.
\item if $E_{\var{dense}}$ is monotone then $E^{\var{ssi}}_{\var{dense}}$ is also monotone.
\end{enumerate}

\begin{proof}
We derive from this new program representation a system of equations isomorphic to the initial one by associating trivial transfer functions with the newly created ``copies''.
The {INFO} and {LINK} properties are trivially maintained.
As only trivial and constant functions have been added, monotonicity is maintained.

To show that we provide {SPLIT-DEF}, we must first show that each $i\in \textrm{live}(v)$ where $F_v^s$ is non-trivial contains a definition (resp. last use) of $v$.
The function {\sf split} separates these \progpoints in lines 9 and 16, and later, in line 23, inserts definitions in those \splitpoints.
To show that we provide {SPLIT-MEET}, we must prove that each join (resp. split) node for which $E_{\var{dense}}$ has possibly different values on its incoming edges should have a $\phi$-function (resp. $\sigma$-function) for $v$.
These \progpoints are separated in lines 7 and 14 of {\sf split}.
To see why this is the case, notice that line 7 separates the \progpoints in the iterated dominance frontier of \progpoints that originate information that flows forward.
These are, as a direct consequence of the definition of iterated dominance frontier, the \splitpoints where information collide.
Similarly, line 14 separates the \progpoints in the post-dominance frontier of regions which originate information that flows backwardly.

We ensure {VERSION} as a consequence of the SSA conversion.
All our program representations preserve the SSA representation, as we include the definition sites of $v$ in line 11 of {\sf split}.
Function {\sf rename} ensures the existence of only one definition of each variable in the program code (line 27), and that each definition dominates all its uses (consequence of the traversal order).
Therefore, the newly created live ranges are connected on the dominance tree of the source program.
Function {\sf rename} also creates a new program representation for which it is straightforward to build a system of equations $E^{\var{ssi}}_{\var{dense}}$ isomorphic to $E_{\var{dense}}$:
Firstly, the constraint variables are renamed in the same way that program variables are.
Secondly, for each program variable, new system variables bound to $\bot$ are created for each \progpoint outside of its live-range.

\end{proof}

\section{Equivalence between sparse and dense analyses.}
\label{app:sparse_equiv}
We have shown that {\sf SSIfy} transforms a program $P$ into another program $P^{\var{ssi}}$ with the same semantics.
Furthermore, this representation provides the SSI property for a system of equations $E^{\var{ssi}}_{\var{dense}}$ that we extract from $P^{\var{ssi}}$.
This system is isomorphic to the system of equations $E_{\var{dense}}$ that we extract from $P$.
From the so obtained program under SSI for the constrained system $E^{\var{ssi}}_{\var{dense}}$, Definition~\ref{def:ssi_eq} shows how to construct a sparse constrained system $E^{\var{ssi}}_{\var{sparse}}$.
When transfer functions are monotone and the lattice has finite height, Theorem~\ref{thm:sparse_equiv} states the equivalence between the sparse and the dense systems. The purpose of this section is to prove this theorem. We start by introducing the notion of {\em coalescing}.
Let $E$ be a constraint system that associates with each $1\leq i\leq n$ the constraint $a_i\inclu H_i(a_1,\dots, a_n)$, where each $a_i$ is an element of a lattice $\cal L$ of finite height, and $H_i$ is a monotone function from ${\cal L}^n$ to $\cal L$.
Let $(A_1,\dots,A_n)$ be the maximum solution to this system, and let $1\leq m\leq n$ such that $\forall i,\, 1\leq i\leq m,\, A_i=A_m$.
We define a ``coalesced" constraint system $E_{coal}$ in the following way:
for each $1\leq i\leq m$ we create the constraint $b_m\inclu H_i(b_m,\dots, b_m, b_{m+1},\dots, b_n)$; for each $m< i\leq n$ we create the constraint $b_i\inclu H_i(b_m,\dots, b_m, b_{m+1},\dots, b_n)$.
Lemma~\ref{lemma:coal_equiv} shows that coalescing preserves the maximum
solution of the original system.

\begin{lemma}[Equivalence with coalescing]
\label{lemma:coal_equiv}
If $E$ is a constraint system with maximum solution $(A_1, \dots, A_m, \ldots, A_n)$, for any $i, j, 1 \leq i, j \leq m$ we have that $A_i = A_j$, and $E_{coal}$ is the ``coalesced"~system that we derive from $E$, then the maximum solution of $E_{coal}$ is $(A_m,\dots,A_n)$.
\end{lemma}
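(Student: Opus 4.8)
The plan is to prove the equality of maximum solutions by two direct inclusions, exploiting the fact that solutions of $E_{coal}$ correspond exactly to solutions of $E$ whose first $m$ coordinates are forced to coincide. Concretely, I would introduce the ``unfolding'' map $\iota$ that sends a tuple $(y_m,\dots,y_n)\in{\cal L}^{\,n-m+1}$ to $(y_m,\dots,y_m,y_{m+1},\dots,y_n)\in{\cal L}^n$ (with $m$ leading copies of $y_m$), and observe the trivial but crucial identity: for every $i$, $H_i$ evaluated on $\iota(y_m,\dots,y_n)$ is literally $H_i(y_m,\dots,y_m,y_{m+1},\dots,y_n)$, which is the right-hand side appearing in the $i$-th constraint of $E_{coal}$. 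So the constraints of $E$, restricted to tuples with $a_1=\dots=a_m$, transliterate verbatim into the constraints of $E_{coal}$ and back.

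First I would check that $(A_m,\dots,A_n)$ is a solution of $E_{coal}$. Since $(A_1,\dots,A_n)$ solves $E$ and $A_1=\dots=A_m$, the tuple $(A_1,\dots,A_n)$ equals $\iota(A_m,\dots,A_n)$; hence for $1\le i\le m$ we get $A_m=A_i\sqsubseteq H_i(A_1,\dots,A_n)=H_i(A_m,\dots,A_m,A_{m+1},\dots,A_n)$, i.e. the $i$-th constraint on $b_m$, and for $m<i\le n$ we get $A_i\sqsubseteq H_i(A_m,\dots,A_m,A_{m+1},\dots,A_n)$, the $i$-th constraint on $b_i$. So all constraints of $E_{coal}$ hold at $(A_m,\dots,A_n)$.

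Next I would show this tuple dominates every solution of $E_{coal}$. Take an arbitrary solution $(B_m,\dots,B_n)$ of $E_{coal}$ and let $(\tilde B_1,\dots,\tilde B_n)=\iota(B_m,\dots,B_n)$. For each $1\le i\le n$ the corresponding $E_{coal}$-constraint reads $\tilde B_i\sqsubseteq H_i(B_m,\dots,B_m,B_{m+1},\dots,B_n)=H_i(\tilde B_1,\dots,\tilde B_n)$, so $(\tilde B_1,\dots,\tilde B_n)$ is a solution of $E$. By maximality of $(A_1,\dots,A_n)$, $\tilde B_i\sqsubseteq A_i$ for all $i$, and in particular $B_m\sqsubseteq A_m$ and $B_i\sqsubseteq A_i$ for $m<i\le n$. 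Together with the first step, $(A_m,\dots,A_n)$ is a solution of $E_{coal}$ above all others, hence its maximum solution.

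I do not expect a genuine obstacle here: the whole argument is order-theoretic bookkeeping and, notably, uses neither monotonicity of the $H_i$ nor finiteness of the height of ${\cal L}$ — those hypotheses are only needed downstream, to guarantee that the maximum solutions exist and are computed by the worklist iteration of Figure~\ref{fig:propback}, which Theorem~\ref{thm:sparse_equiv} then invokes. The single point demanding care is the indexing of the coalesced system: one must make sure that the $m$ separate constraints with left-hand side $b_m$ are all discharged (they are, one per $i\in\{1,\dots,m\}$, from $A_i=A_m\sqsubseteq H_i(A_1,\dots,A_n)$), and that replacing the coalesced block by $m$ copies of $b_m$ aligns coordinate-for-coordinate with the original $n$-tuple so that $\iota$ is really the inverse of the projection on tuples with equal first $m$ entries.
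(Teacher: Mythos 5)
Your proof is correct and follows essentially the same route as the paper's: show that $(A_m,\dots,A_n)$ satisfies $E_{coal}$ because $(A_1,\dots,A_n)=\iota(A_m,\dots,A_n)$ solves $E$, then unfold any solution of $E_{coal}$ back into a solution of $E$ and invoke maximality of $(A_1,\dots,A_n)$. The explicit map $\iota$ and the remark that monotonicity and finite height are only needed to guarantee that the maximum solutions exist are welcome clarifications, but the underlying argument is the one in the paper.
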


\begin{proof}
Both system have a (unique) maximum solution~(see e.g.~\cite{Nielson05}), although the solution of the ``coalesced" system has smaller cardinality, e.g., n-m+1.
Now, as $(A_m, \dots, A_m, A_{m+1}, \dots, A_n)$ is a solution to $E$, by definition of $E_{coal}$, $(A_m, \dots, A_n)$ is a solution to $E_{coal}$. Let us prove that this solution is maximum, i.e. for any solution $(B_m, \dots, B_n)$ of $E_{coal}$, we have $(B_m, \dots, B_n) \sqsubseteq (A_m, \dots, A_n)$. By definition of $E_{coal}$, we have that $(B_m, \dots, B_m, B_{m+1}, \dots, B_n)$ is a solution to $E$. As $(A_1, \dots, A_n)$ is maximum, we have
$(B_m,  \dots, B_m, B_{m+1}, \dots, $ $B_n)\sqsubseteq (A_1,\dots,A_n)$. So $(B_m,\dots,B_n)\sqsubseteq (A_m,\dots,A_n)$. 
\end{proof}

We now prove Theorem~\ref{thm:sparse_equiv}, which states that there exists a direct mapping between the maximum solution of a dense constraint system associated with a SSI-form program, and the sparse system that we can derive from it, according to Definition~\ref{def:ssi_eq}.

\begin{theorem}[sparse $\equiv$ dense]
\label{thm:sparse_equiv}
Consider a program in SSI-form that gives origin to a constraint system $E^{\var{ssi}}_{\var{dense}}$ associating with each variable $v$ the constraints $[v]^p = [v]^p \wedge F_v^{s,p}([v_1]^s, \dots, [v_n]^s)$. Suppose that each $F_v^{s,p}$ is a monotone function from ${\cal L}^n$ to $\cal L$ where $\cal L$ is of finite height.
Let $(Y_v)_{v\in variables}$ be the maximum solution of the corresponding sparse constraint system.

Then, $(X_v^p)_{(v,i)\in variables\times prog\_points}$ with
$\strut\left\lbrace
\begin{tabular}{ll}
$X_v^p=Y_v$ & for $p\in \var{live}(v)$\\
 $X_v^p=\bot$ & otherwise
\end{tabular}
\right.
$
is the maximum solution to $E_{\var{dense}}^{\var{ssi}}$.
\end{theorem}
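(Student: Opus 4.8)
The strategy is to reduce $E^{\var{ssi}}_{\var{dense}}$ to $E^{\var{ssi}}_{\var{sparse}}$ by \emph{coalescing} (Lemma~\ref{lemma:coal_equiv}). Write $(\widehat X_v^p)$ for the maximum solution of $E^{\var{ssi}}_{\var{dense}}$, which exists and is unique since each $F_v^{s,p}$ is monotone and $\cal L$ has finite height. Let $i_v$ be the unique defining instruction of $v$ (a copy, a $\phi$-function, a $\sigma$-function, or an actual instruction --- $\SSIfy$'s output is in SSA form), and let $d_v$ be the program point immediately after $i_v$. By construction of the isomorphic system (cf. the proof of Theorem~\ref{theo:ssify}) the variables $[v]^p$ with $p\notin\textrm{live}(v)$ are pinned to $\bot$, so $\widehat X_v^p=\bot$ there. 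The heart of the argument is the \emph{invariance claim}: $\widehat X_v^p=\widehat X_v^{d_v}$ for every $p\in\textrm{live}(v)$. Granting it, I would coalesce, for each $v$, the block $\{[v]^p:p\in\textrm{live}(v)\}$ into one variable $[v]$ and substitute $\bot$ for the out-of-range variables; Lemma~\ref{lemma:coal_equiv} then preserves the maximum solution, whose $[v]$-component equals $\widehat X_v^{d_v}$.

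\emph{The invariance claim.} For $p\in\textrm{live}(v)$ with $p\neq d_v$, and $s$ the program point just before $p$, the transfer function $F_v^{s,p}$ must be trivial: by [SPLIT-DEF], a non-trivial and non-undefined $F_v^{s,p}$ would require a definition of $v$ between $s$ and $p$, but $i_v$ is the only such definition, which would force $p=d_v$. Hence the dense inequation is $[v]^p\inclu[v]^s$, with $s\in\textrm{live}(v)$ by [VERSION]. Since the definition of $v$ dominates all of $\textrm{live}(v)$ (the classical SSA dominance property), every $p\in\textrm{live}(v)$ is reached from $d_v$ by a path contained in $\textrm{live}(v)$, and chaining the above inequations along it yields $\widehat X_v^p\inclu\widehat X_v^{d_v}$. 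For the reverse inequality I would consider the candidate $(Z_v^p)$ defined by $Z_v^p=\widehat X_v^{d_v}$ for $p\in\textrm{live}(v)$ and $Z_v^p=\widehat X_v^p$ otherwise, so $(Z)\sqsupseteq(\widehat X)$ componentwise, and check that $(Z)$ still solves $E^{\var{ssi}}_{\var{dense}}$: interior inequations become $\widehat X_v^{d_v}\inclu\widehat X_v^{d_v}$; the out-of-range equations pinning variables to $\bot$ are untouched; and any inequation whose right-hand side reads $[v]$ lives --- by [LINK] together with [SPLIT-DEF] applied to the \emph{constrained} variable --- only at that variable's unique definition point, where monotonicity of $F$ and $(Z)\sqsupseteq(\widehat X)$ ensure the right-hand side only grows. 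By maximality of $(\widehat X)$ we get $(Z)=(\widehat X)$, which is the invariance claim.

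\emph{Identifying the coalesced system with $E^{\var{ssi}}_{\var{sparse}}$.} Interior inequations collapse to $[v]\inclu[v]$ and out-of-range ones to $\bot\inclu\dots$, both vacuous. The inequation(s) attached to $i_v$ --- one per incoming edge when $i_v$ is a $\phi$-function, and their conjunction is the meet over those edges --- become $[v]\inclu G_v^{i_v}([a],\dots,[b])$, where $a,\dots,b$ are the versions used at $i_v$: by [LINK], $F_v^{s,p}$ reads only $[a]^s,\dots,[b]^s$, and $G_v^{i_v}$ is, by Definition~\ref{def:ssi_eq}, exactly the projection of $F_v^{s,p}$ onto ${\cal L}_a\times\dots\times{\cal L}_b$; the backward case with $\sigma$-functions is symmetric. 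So the coalesced system is $E^{\var{ssi}}_{\var{sparse}}$ up to vacuous constraints, hence its maximum solution is $(Y_v)$ and $\widehat X_v^{d_v}=Y_v$. With the invariance claim this gives $\widehat X_v^p=Y_v$ on $\textrm{live}(v)$ and $\widehat X_v^p=\bot$ off it, i.e. $(\widehat X)=(X)$, as required.

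\textbf{Main obstacle.} The subtle step is the reverse inequality of the invariance claim: the maximality argument forces one to verify that \emph{every} dense inequation depending on $[v]$ survives when $[v]$ is raised to its value at $d_v$, and the clean way to do this is to observe, via [SPLIT-DEF] (applied to the constrained variable) and [LINK], that such inequations only occur at unique definition points. A secondary bookkeeping difficulty is collapsing the several dense inequations attached to a $\phi$- or $\sigma$-function into the one sparse inequation phrased with $G_v^i$.
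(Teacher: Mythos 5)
Your proposal follows essentially the same route as the paper's proof: both reduce $E^{\var{ssi}}_{\var{dense}}$ to $E^{\var{ssi}}_{\var{sparse}}$ by coalescing the block $\{[v]^p\}_{p\in\var{live}(v)}$ via Lemma~\ref{lemma:coal_equiv}, justify the coalescing by the invariance of $[v]^p$ over the live range (via SPLIT-DEF/VERSION) and the pinning of out-of-range variables to $\bot$, and then translate the dense constraints into the cases $[v]\inclu G_v^i([a],\dots,[b])$, $[v]\inclu[v]$, and the vacuous out-of-range ones. The only difference is one of detail: you spell out the invariance claim (the triviality-and-chaining argument plus the maximality argument with the raised candidate $Z$) that the paper asserts in a single sentence, which is a faithful elaboration rather than a different approach.
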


\begin{proof}
The constraint systems $E_{\var{dense}}^{\var{ssi}}$ and $E_{\var{sparse}}^{\var{ssi}}$ have a maximum unique solution, because the transfer functions are monotone and  $\cal{L}$ has finite height

The idea of the proof is to modify the constraint system $E_{\var{dense}}^{\var{ssi}}$ into a system equivalent to $E_{\var{sparse}}^{\var{ssi}}$.
To accomplish this transformation, we
(i) replace each $F_v^{s,p}$ by $G_v^i$, where $G_v^i$ is constructed as in Definition~\ref{def:ssi_eq};
(ii) for each $v$, coalesce $\lbrace[v]^p\rbrace_{p\in \var{live}(v)}$ into $[v]$;
(iii) coalesce all other constraint variables into $[\undef]$.

The {LINK} property allows us to replace $F_v^{s,p}$ by $G_v^i$. Due to {SPLIT-DEF},
a new variable is defined at each \splitpoint where information is generated, and due to {VERSION} there is only one live range associated with each variable. Hence, $\lbrace[v]^p\rbrace_{p\in \var{live}(v)}$ is invariant. Due to {INFO}, we have that  $\lbrace[v]^p\rbrace_{p\not\in \var{live}(v)}$ is bound to $\bot$. Due to Lemma~\ref{lemma:coal_equiv}, we know that this new constraint system has a maximum solution $(Y_v)_{v\in \textit{variables}\cup \undef}$: $X_v^p$ equals $Y_v$ for all $p\in \var{live}(v)$, and $Y_\undef$ otherwise.

We translate each constraint $[v]^p\inclu F_v^{s,p}([v_1]^s,\dots,[v_n]^s)$ (with $i$ the \splitpoint between $p$ and $s$), in the original system, to a constraint in the ``coalesced'' one in the following way:
$$\left\lbrace
\begin{array}{lllr}
\textrm{if } p\in \var{live}(v): & \textrm{if } i\in \var{defs}(v) & :\, [v] \inclu G_v^i([a], \dots, [b]) & (1)\\
                           & \textrm{else} &  :\, [v] \inclu [v] & (2)\\
\textrm{otherwise}         &               &  :\, [\undef] \inclu \bot & (3)\\
\end{array}
\right.
$$
Case (1) follows from {LINK}, case (2) follows from {SPLIT-DEF}, and case (3) follows from {INFO}. By ignoring $\undef$ that appears only in (3), and by removing the constraints produced by (2), which are useless, we obtain $E_{\var{sparse}}^{\var{ssi}}$. 
\end{proof}

\end{document}